\newcommand{\eps}{\varepsilon}
\newcommand{\NP}{{\sf NP}}
\def \P {{\sf P}}
\def \N {{\mathbb N}}
\newcommand{\Ex}{\mathop{\bf E\/}}
\newcommand{\mc}{\mathcal}
\newtheorem{lemma}{Lemma}
\newtheorem{claim}{Claim}
\newtheorem{corollary}{Corollary}
\newtheorem{proposition}{Proposition}
\newtheorem{theorem}{Theorem}
\newtheorem{observation}{Observation}
\newenvironment{reminder}[1]{\smallskip
\noindent {\bf Reminder of #1  }\em}{}
\newenvironment{proofof}[1]{\smallskip
\noindent {\bf Proof of #1.  }}{\hfill$\Box$}
\begin{document}

\title{Improved Parameterized Algorithms for Constraint Satisfaction}

\author{Eun Jung Kim\footnote{This work was performed while the author was at LIRMM-CNRS, supported by the ANR-project AGAPE. Email: {\tt eunjungkim78@gmail.com}}\\ LAMSADE-CNRS, Universit\'{e} de Paris-Dauphine, 75775 Paris, France \and Ryan Williams\footnote{This work was performed while the author was at IBM Research--Almaden, supported by the Josef Raviv Memorial Fellowship. Email: {\tt rrwilliams@gmail.com}} \\ Stanford University, Stanford, CA, USA}

\date{}
\maketitle

\begin{abstract} For many constraint satisfaction problems, the algorithm which chooses a random assignment achieves the best possible approximation ratio. For instance, a simple random assignment for {\sc Max-E3-Sat} allows 7/8-approximation and for every $\eps >0$ there is no polynomial-time ($7/8+\eps$)-approximation unless P=NP. Another example is the {\sc Permutation CSP} of bounded arity. Given the expected fraction $\rho$ of the constraints satisfied by a random assignment (i.e. permutation), there is no $(\rho+\eps)$-approximation algorithm for every $\eps >0$, assuming the Unique Games Conjecture (UGC).

In this work, we consider the following parameterization of constraint satisfaction problems. Given a set of $m$ constraints of constant arity, can we satisfy at least $\rho m +k$ constraint, where $\rho$ is the expected fraction of constraints satisfied by a random assignment? {\sc Constraint Satisfaction Problems above Average} have been posed in different forms in the literature~\cite{Niedermeier2006,MahajanRamanSikdar09}. We present a faster parameterized algorithm for deciding whether $m/2+k/2$ equations can be simultaneously satisfied over ${\mathbb F}_2$. As a consequence, we obtain $O(k)$-variable bikernels for {\sc boolean CSPs} of arity $c$ for every fixed $c$, and for {\sc permutation CSPs} of arity $3$. This implies linear bikernels for many problems under the ``above average'' parameterization, such as {\sc Max-$c$-Sat}, {\sc Set-Splitting}, {\sc Betweenness} and {\sc Max Acyclic Subgraph}. As a result, all the parameterized problems we consider in this paper admit $2^{O(k)}$-time algorithms.

We also obtain non-trivial hybrid algorithms for every Max $c$-CSP: for every instance $I$, we can either approximate $I$ beyond the random assignment threshold in polynomial time, or we can find an optimal solution to $I$ in subexponential time.
\end{abstract}

\thispagestyle{empty}
\newpage
\setcounter{page}{1}

\section{Introduction}

The constraint satisfaction problem~(CSP) is a general language to express many combinatorial problems such as graph coloring, satisfiability and various permutation problems. An instance of a CSP is a set $V$ of variables, a domain $D$ for the variables and ${\mc C}$ a set of constraints. The objective is to assign a value from $D$ to each variable of $V$ so as to maximize the number of satisfied constraints. For example, {\sc 3-Coloring} can be seen as a CSP over a three-element domain, and the constraints correspond to edges (thus the arity of each constraint is 2), indicating that values assigned to the endpoints of an edge must differ. In this work, we are interested in two types of CSPs. In a boolean CSP, the domain $D$ is $\{-1,+1\}$. In a permutation CSP, the size of the domain equals $|V|$ and we request that the assignment is a bijection.

As solving CSPs is $\NP$-hard in general, the next question is whether they allow efficient approximation algorithms. Interestingly, many constraint satisfaction problems exhibit a {\em hardness threshold}, where it is relatively easy to obtain a feasible solution that satisfies a certain fraction of the optimum number of constraints, yet it is difficult to find a solution that is even slightly better. Perhaps the best known example is {\sc Max-E3-Sat}, where we are given a CNF formula in which all clauses have exactly three literals, and wish to find a truth assignment satisfying as many clauses as possible. Although a uniform random assignment satisfies $7/8$ of the clauses, H{\aa}stad~\cite{hastad} proved that it is $\NP$-hard to satisfy $7/8+\eps$ for every $\eps > 0$. The list of problems exhibiting a hardness threshold contains {\sc Max-E$c$-Sat} for $c\geq 3$, {\sc Max-E$c$-Lin-2} for $c\geq 3$, and {\sc E$c$-Set Splitting} for $c\geq 4$. For all these problems, a uniform random assignment achieves the best approximation ratio based on $\P\neq \NP$. Furthermore, for these problems the lower bounds on the optimum are also {\em tight} in the sense that they are optimal for an infinite sequence of instances.

For permutation CSPs, similar results have been identified, conditioned on the Unique Games Conjecture (UGC) of Khot~\cite{khot02}. In the {\sc Betweenness} problem, we have a set of {\em betweenness constraints} of the form ``$v_i$ is between $v_j$ and $v_k$'' for distinct variables $v_i, v_j , v_k \in V$ and the task is to find a permutation of the variables that satisfies the maximum number of constraints. More formally, constraints have the form \[(\pi(v_j) < \pi(v_i) < \pi(v_k)) \vee (\pi(v_k) < \pi(v_i) < \pi(v_j))\] and the task is to find a bijection $\pi : V \rightarrow \{1,\ldots,|V|\}$ that satisfies the maximum. We can satisfy one-third of the constraints in expectation by choosing a uniform random permutation. Moreover, it is hard to achieve a better approximation ratio, assuming UGC~\cite{charikar09}. Hardness thresholds under UGC were known for permutation CSPs of arity 2 and 3~\cite{Guruswami08, charikar09}. Recently these results were generalized to arbitrary fixed arity~\cite{Guruswami11}. Here, the lower bounds on the optimum obtained by a random assignment is tight as well: consider, for example, an instance of {\sc Betweenness} in which we have all possible three constraints for triplets of variables.

These threshold phenomena are fascinating in that they provide a sharp boundary between feasibility and infeasibility: while the average is easy to obtain, satisfying an ``above average" fraction is intractable. While the identification of such thresholds is extremely interesting, it does not end the story but rather initiates  another one. As it is likely that practitioners will require feasible solutions that exceed these easy thresholds, it is important to understand how much computational effort is required to solve a problem beyond its threshold. One way of cleanly formalizing this question uses parameterized complexity.

\begin{comment}
For {\sc Max-E3-Sat}, it is easy to see that a random assignment satisfies $7/8$ of the clauses. An assignment that meets this lower bound can be found in polynomial time \cite{Johnson73}. Since no assignment can satisfy more than $m$ constraints, this implies a $7/8$-approximation algorithm. Interestingly, we cannot hope to improve this approximation ratio unless $\P=\NP$ \cite{hastad}.
\end{comment}

A \emph{parameterized problem} is a subset $L\subseteq \Sigma^* \times
\mathbb{N}$ over a finite alphabet $\Sigma$. $L$ is
\emph{fixed-parameter tractable} if the membership of $(x,k)$ in $\Sigma^* \times \mathbb{N}$ can be decided in time $|x|^{O(1)} \cdot f(k)$ where $f$ is a computable function of the
parameter~\cite{FlumGrohe06}. Given a pair of parameterized problems $L$ and $L'$,
a {\em bikernelization} is a polynomial-time pre-processing algorithm that maps an instance $(x,k)$ to an instance $(x',k')$ (the \emph{bikernel}) such that (i)~$(x,k)\in L$ if and only if
$(x',k')\in L'$, (ii)~ $k'\leq f(k)$, and (iii)~$|x'|\leq g(k)$ for some
functions $f$ and $g$. The function $g(k)$ is called the {\em size} of the bikernel.
A parameterized problem is fixed-parameter tractable if and only if it is decidable and admits a bikernelization~\cite{FlumGrohe06}. A {\em kernelization} of a parameterized problem is a bikernelization to itself. For an overview of kernelization, see the recent survey~\cite{bodlaender}.

\begin{comment}
Note that a {\em standard} parameterization, which asks whether we can find an assignment satisfying at least $k$ constraints, is not interesting for constraint satisfaction problems. Recall that a uniform random assignment satisfies $\rho \cdot m$ constraints for some constant $\rho >0$ depending on the constraint type, and this immediately implies a kernel with linear number of constraints (and thus with a linear number of variables). Indeed, if $\rho\cdot m \geq k$, then the instance is {\em yes} and otherwise, we have at most $k/\rho$ constraints.
\end{comment}

Motivated by the discussion above, our work focuses on the following question:

\begin{quote}
$\text{\sc (permutation) Max-$c$-CSP Above Average}$: We are given a parameter $k$ and a set of constraints with at most $c$ variables per constraint. Each constraint has a positive integer weight. Determine if there is a variable assignment (or permutation) that satisfies a subset of constraints with total weight at least $\rho \cdot W + k$, where $W$ is the total weight of all constraints and $\rho$ is the expected fraction of weighted constraints satisfied by a uniform random assignment.
\end{quote}

\noindent {\bf Previous Work.} Parameterizations \emph{above a guaranteed value} were first considered
by Mahajan and Raman \cite{mahajanJA31} for the problems \textsc{Max-Sat} and \textsc{Max-Cut}. In a recent paper~\cite{MahajanRamanSikdar09}, Mahajan, Raman
and Sikdar argue, in detail, that a practical (and challenging) parameter for a
maximization problem is the number of clauses satisfied above a
\emph{tight} lower bound, which is $(1-2^{-c})m$ for \textsc{Max-Sat} if
each clause contains exactly $c$ different variables. In the monograph by Neidermeier \cite{Niedermeier2006}, an open problem attributed to Benny Chor [25, p.43] asks whether {\sc Betweenness Above Average} is fixed parameter tractable.

A way for systematic investigation of above-average parameterization was recently presented by Gutin et al.~\cite{GutinKimSzeiderYeo09a}. They presented reductions to quadratic bikernels (i.e., bikernels with $O(k^2)$ variables) for the above-average versions of problems such as {\sc Maximum Acyclic Subgraph} and {\sc Max-$c$-Lin-2}. Alon {\em et al.}~\cite{alon2010} pushed forward the idea of representing a CSP instance algebraically, presenting a quadratic kernel for {\sc Max-$c$-CSP} using a similar method. This method was used to give a fixed-parameter algorithm for {\sc Betweenness Above Average} \cite{GutinKMY10}, in which the idea of a {\em coarse} ordering is used. This result was later generalized in \cite{gutinvan} to obtain a quadratic bikernel for {\sc Permutation Max-3-CSP}.

\noindent {\bf Our contribution.} We show that every {\sc Max-$c$-CSP} and {\sc Permutation Max-3-CSP} admits a problem bikernel with only $O(k)$ variables in the above-average parameterization. More precisely, we prove that essentially any {\em hard} instance of {\sc Max-$c$-CSP} must have less than $c(c+1)k/2$ variables; instances with more variables are yes-instances for which good assignments can be generated in polynomial time. This improves over the main results of Alon {\em et al.}~\cite{alon2010} who gave kernels of $O(k^2)$ variables, Crowston {\em et al.}~\cite{swat2010} for {\sc Max-$c$-CSP} with $O(k\log{k})$ variables and \cite{gutinvan} for {\sc Permutation Max-3-CSP} of $O(k^2)$ variables. This implies linear variable bikernels for the above-average versions of many different problems, such as {\sc Max-$c$-Lin-2}, {\sc Max-$c$-Sat}, {\sc Set Splitting} when the sets are of size $\geq 4$, {\sc Maximum Acyclic Subgraph}, {\sc Betweenness}, {\sc Circular Ordering}, and {\sc 3-Linear Ordering}.

The key to our results is a fixed-parameter algorithm for the following problem:

\begin{quote}
$\text{\sc Max-$c$-Lin-$2$ Above Average}$: {\em We are given a parameter $k$ and a system of linear equations over $\mathbb{F}_2$ with at most $c$ variables per equation. Each equation $e$ has a positive integer weight, and the weight of an assignment in the system is defined to be the total sum of weights of equations satisfied by the assignment. Determine if there is an assignment of weight at least $W/2 + k/2$, where $W$ is the total weight of all equations.}
\end{quote}

Note the random assignment algorithm yields $W/2$ weight, and it is famously $\NP$-hard to attain $W/2+\eps W$ for every $\eps > 0$~\cite{hastad}. Our proofs imply the stronger result that {\em every} constraint satisfaction problem admits a hybrid algorithm, in the following sense:

\begin{theorem}\label{hybrid} For {\em every} Boolean {\sc Max-$c$-CSP}, there is an algorithm  with the property that, for every $\eps > 0$, on any instance $I$, the algorithm outputs either:\begin{itemize}
\item an optimal solution to $I$ within $O^{\star}(2^{c(c+1) \eps m/2})$ time, or
\item a $(\rho+\eps/2^c)$-approximation to $I$ within polynomial time, where $\rho$ is the expected fraction of weighted constraints satisfied by a uniform random assignment to the CSP.
\end{itemize}
\end{theorem}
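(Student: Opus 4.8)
The plan is to turn the kernelization into a win--win dichotomy driven by the number of variables $n$. Recall the structural heart of our results: any instance of {\sc Max-$c$-CSP} on $n$ variables with $n \ge c(c+1)k/2$ is a yes-instance of the above-average problem for parameter $k$, and moreover an assignment of weight at least $\rho W + k$ can be produced in polynomial time. Given an instance $I$ with $m$ constraints and an error parameter $\eps > 0$, I would set $k := \lceil \eps m \rceil$ and branch on whether $n$ meets the threshold $c(c+1)k/2$.

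In the first branch, $n \ge c(c+1)k/2$, I invoke the structural result directly: in polynomial time it hands back an assignment whose weight is at least $\rho W + k \ge \rho W + \eps m$, i.e.\ strictly above the random-assignment value $\rho W$, and this assignment is returned as the approximate solution. In the complementary branch, $n < c(c+1)k/2 = c(c+1)\lceil \eps m\rceil/2$, I simply enumerate all $2^n$ Boolean assignments and return an optimal one; since $n < c(c+1)\eps m/2 + O(1)$, this costs $O^{\star}(2^{c(c+1)\eps m/2})$ time. Thus for every instance the algorithm produces either an exact optimum in the stated subexponential time or a polynomial-time assignment beating the threshold, matching the two cases in the statement.

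The step I expect to be the main obstacle is converting the additive ``above-average'' surplus into the claimed multiplicative ratio $\rho + \eps/2^c$ against $\mathrm{OPT}$. The clean case is unweighted, where $W = m$: the first branch yields weight $(\rho + \eps)m \ge (\rho + \eps/2^c)\,\mathrm{OPT}$ because $\mathrm{OPT} \le m = W$, so the ratio even improves to $\rho + \eps$. For general weights the surplus $\eps m$ must be compared against $\mathrm{OPT} \le W$, and the factor $2^{-c}$ is exactly what absorbs the gap between the constraint count $m$ and the total weight $W$: passing through the Fourier/{\sc Max-$c$-Lin-$2$} representation that underlies the structural lemma, each arity-$\le c$ constraint contributes Fourier mass that is an integer multiple of $2^{-c}$, so rescaling to integer weights inflates the instance by a factor $2^c$ and makes $W$ and $m$ comparable at this scale. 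Verifying that this rescaling preserves both the $\rho W + k$ guarantee and the variable-count threshold, and hence that the final ratio is $\rho + \eps/2^c$, is the one place where the bookkeeping must be done carefully; I would also double-check that the structural lemma is genuinely constructive (polynomial-time assignment-producing), not merely an existence statement, since the first branch relies on that.
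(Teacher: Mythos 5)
Your first branch rests on a structural claim that the paper never proves and that is in fact false: namely, that any {\sc Max-$c$-CSP} instance on $n$ variables with $n \geq c(c+1)k/2$ is a yes-instance of the above-average problem. The paper's dichotomy (Theorem~\ref{maxclin2yes}) holds only for {\sc Max-$c$-Lin-2}, where the variable count refers to variables actually appearing in the reduced (non-degenerate) linear system. For a general Boolean CSP one must first pass through Lemma~\ref{csp2lin}, which expands each constraint into its Fourier monomials, and monomials coming from different constraints can cancel, so variables of $I$ may vanish entirely from the resulting Lin-2 instance $I'$. Concretely: take $n/2$ disjoint pairs of variables and, on each pair, impose both the constraint $x \neq y$ and the constraint $x = y$, each of weight $1$. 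Every assignment satisfies exactly one constraint per pair, so the optimum equals the random-assignment average $\rho W$, and the instance is a no-instance for every $k \geq 1$ even though $n$ is arbitrarily large. Your algorithm would place this instance in the first branch and promise an assignment of weight at least $\rho W + \eps m$, which does not exist. (The introduction's phrase ``essentially any hard instance \ldots must have less than $c(c+1)k/2$ variables'' hides exactly this issue behind the word ``essentially''; the cancellation phenomenon is the same one the paper confronts in Section~\ref{sec:per}, where the directed cycle $\overrightarrow{C_n}$ shows the gap between $|V|$ and the surviving variables can be unbounded.)

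The paper's proof runs the dichotomy on $I'$, not on $I$: reduce $I$ to a weighted {\sc Max-$c$-Lin-2} instance $I'$ via Lemma~\ref{csp2lin}, so that weight $\rho m + \delta k$ in $I$ (with $\delta \geq 1/2^c$) corresponds to weight $W/2 + k/2$ in $I'$; set $k = \eps m$; then test whether $|var(I')| \geq (c(c+1)/2)k$. If yes, Theorem~\ref{maxclin2yes} constructively produces the approximation (this is where the $2^{-c}$-type loss enters: it is the factor $\delta$ in translating Lin-2 surplus back into CSP surplus, coming from Fourier coefficients being multiples of $2^{-c}$ --- not, as you suggest, a device for comparing $m$ with $W$ in weighted instances). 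If no, one exhaustively searches over the at most $c(c+1)\eps m/2$ variables of $I'$; this recovers an exact optimum of $I$ because, by the reduction, the weight of any assignment to $I$ is an affine function of the satisfied-minus-falsified weight of its restriction to $var(I')$, so variables absent from $I'$ cannot affect the objective and may be set arbitrarily. This last point is also what your second branch would need: under the correct branch condition it is $|var(I')|$, not $n$, that is small, so enumerating $2^n$ assignments is not available, and the affine correspondence is what licenses searching only over $var(I')$.
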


This resolves an open problem of Vassilevska, Williams, and Woo~\cite{vass2006}, who asked if {\sc Max-3-Sat} had an algorithm of this form.

\section{Preliminaries}

We define the {\sc Boolean Max-$c$-CSP} and {\sc Max-$c$-Permutation CSP}. A boolean constraint satisfaction problem is specified by the domain $\{-1,+1\}$ and a set of predicates $\mathbb{P}$, called {\em payoff functions} as well. A predicate $P\in \mathbb{P}$ is a function from $\{-1,+1\}^{c'}$ to $\{0,1\}$ for $c'\leq c$. The maximum number $c$ of inputs to the predicates in $\mathbb{P}$ is the arity of the problem. We interpret $-1$ as the value {\sf True} and $+1$ as {\sf False}.

An instance of boolean CSP is specified as a set of variables $V$ along with a collection of triples $\mc{C}$ = $\{(f_1,S_1,w_1) ,\ldots, (f_m,S_m,w_m)\}$, where $f_i\in \mathbb{P}$, every $S_i$ is an ordered tuple from $V$ of size at most $c$, every $w_i$ is a positive integer. A variable assignment $\phi : V \rightarrow D$ {\em satisfies} a constraint $(f_i,(s_1,\ldots,s_{c'}),w_i)$ provided that $f(\phi(s_1),\ldots , \phi(s_{c'})) = 1$. Our goal is to find an assignment $\phi: V\rightarrow \{-1,+1\}$ of maximum weight. Here the {\em weight of an assignment $\phi$} is defined to be the total sum of weights $w_i$ of constraints $(f_i,S_i,w_i)$ satisfied by $\phi$. For example, {\sc Max-E3-Sat} is specified by a single predicate $P: \{-1,+1\}\rightarrow \{0,1\}$, where $P(x,y,z)=0$ if and only if $x=y=z=+1$.

We consider the following parameterization of boolean CSP.

\begin{quote}
$\text{\sc Max-$c$-CSP Above Average ($c$-CSP$_{AA}$)}$\\
{\em Input:} A set $V$ of variables, a collection of constraints $\mc{C}=\{(f_1,S_1,w_1),\ldots, (f_m,S_m,w_m)\}$ with $|S_i|\leq c$ for every $i$, an integer $k\geq 0$.\\
{\em Parameter:} $k$\\
{\em Goal:} Determine if there is an assignment with weight at least $\rho \cdot W + k/2^c$, where $W$ is the total weight of all constraints and $\rho$ is the expected fraction of weighted constraints satisfied by a uniform random assignment.
\end{quote}

In the permutation CSP problem, the domain $D$ is $[n]=\{1,\ldots ,n\}$ and a predicate $P\in \mathbb{P}$ is a function from $\mc{S}_{c'}$ to $\{0,1\}$ for $c'\leq c$, where $\mathbb{S}_{c'}$ is the set of permutations on $\{1,\ldots , c'\}$. Let $\mathbb{P}$ be a set of predicates. An instance of permutation CSP is given as a variable set $V$ and a collection $\mc{C}=\{(f_1,S_1,w_1),\ldots, (f_m,S_m,w_m)\}$, where every $S_i$ is an ordered tuple from $V$ of size at most $c$, every $w_i$ is a positive integer, $f_i\in \mathbb{P}$ is applied to the tuple $S_i$. In a permutation CSP, a variable assignment $\phi : V \rightarrow D$ is required to be a bijection, or equivalently, a permutation. A permutation $\phi$ {\em satisfies} a constraint $(f_i,(s_1,\ldots,s_{c'}),w_i)$ provided that $f(\phi(S_i)) = 1$, viewing the local permutation $\phi(S_i)$ as an element of $\mathbb{S}_{c'}$. Our goal is to find an assignment $\phi: V\rightarrow [n]$ of maximum weight, where the weight of $\phi$ is the total sum of weights of constraints satisfied by $\phi$. For example, {\sc Betweenness} is specified by a single predicate of arity 3, i.e. $P:\mathbb{S}_3 \rightarrow \{0,1\}$, where $P(x,y,z)=1$ if and only if $xyz \in \{123,321\}$.

We consider the following parameterization of permutation CSP:

\begin{quote}
$\text{\sc Max-$c$-Permutation CSP Above Average}$\\
{\em Input:} A set $V$ of variables, a collection of constraints $\mc{C}=\{(f_1,S_1,w_1),\ldots, (f_m,S_m,w_m)\}$, an integer $k\geq 0$.\\
{\em Parameter:} $k$\\
{\em Goal:} Determine if there is an assignment with weight at least $\rho \cdot W + k/(c!4^c)$, where $W$ is the total weight of all constraints and $\rho$ is the expected fraction of weighted constraints satisfied by a uniform random permutation.
\end{quote}

In what follows, we omit {\sc Above Average} and simply say {\sc Max-$c$-Lin-2}, {\sc Max-$c$-CSP} and {\sc Max-$c$-Permutation CSP} to refer to the parameterized problems.

For {\sc Max-$c$-Permutation CSP}, let $\mathbb{P}$ be the associated set of predicates. For each predicate $P \in \mathbb{P}$, we can identify the set $\Pi_P=\{\pi \in \mathbb{S}_c: P(\pi)=1\}$. Notice that $\phi$ satisfies $(f_i,S_i=(v_1,\ldots, v_{c'}),w_i)$ if and only if there is a $\pi \in \Pi_{f_i}$ s.t. $\phi(v_1)\phi(v_2)\cdots \phi(v_{c'})\cong \pi(1)\pi(2)\cdots \pi(c')$, and thus if and only if there is $\pi \in \Pi_{f_i}$ s.t. $\phi(v_{\pi^{-1}(1)}) < \phi(v_{\pi^{-1}(2)}) < \cdots < \phi(v_{\pi^{-1}(c')})$. The folklore result below allows us to focus on the case when $\mathbb{P}$ contains a single predicate $P$ such that $\Pi_P=\{12\cdots c\}$, which we denote {\sc $c$-Linear Ordering}. The proof for the case $c = 3$ and $|\mathbb{P}|=1$ can be found in \cite{gutinvan} and its extension for arbitrary fixed $c$ is straightforward. As the proof of \cite{gutinvan} considers only the case when $\mathbb{P}$ has a single predicate, we sketch the proof here even though its generalization is immediate.

\begin{proposition}\label{focuslin}
Let $(V,\mc{C},k)$ be an instance of {\sc $c$-OCSP$_{AA}$}. There is a polynomial time transformation $R$ from {\sc Max-$c$-Permutation CSP} to {\sc $c$-Linear Ordering} such that an instance $(V,\mc{C},k)$ of {\sc Max-$c$-Permutation CSP} is a yes-instance if and only if $R(V,\mc{C},k)$ is a yes-instance of {\sc $c$-Linear Ordering}.
\end{proposition}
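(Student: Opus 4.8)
The plan is to build directly on the characterization stated just before the proposition: a permutation $\phi$ satisfies a constraint $(f_i,S_i=(v_1,\ldots,v_{c'}),w_i)$ exactly when the ordering that $\phi$ induces on the tuple $S_i$ lies in the accept set $\Pi_{f_i}\subseteq \mathbb{S}_{c'}$, i.e. when there is a (necessarily unique) $\pi\in\Pi_{f_i}$ with $\phi(v_{\pi^{-1}(1)})<\cdots<\phi(v_{\pi^{-1}(c')})$. The point is that every predicate constraint is a disjunction of mutually exclusive identity-ordering constraints, so I would define $R$ by splitting each constraint into its constituent orderings. Concretely, for each constraint $(f_i,S_i,w_i)$ and each $\pi\in\Pi_{f_i}$, output one {\sc $c$-Linear Ordering} constraint on the reordered tuple $S_i^\pi=(v_{\pi^{-1}(1)},\ldots,v_{\pi^{-1}(c')})$ with weight $w_i$; its (identity) predicate is satisfied precisely when $\phi(v_{\pi^{-1}(1)})<\cdots<\phi(v_{\pi^{-1}(c')})$. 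The variable set $V$ and the parameter $k$ are left unchanged. Since $|\Pi_{f_i}|\le c'!\le c!$ and each new constraint has arity $c'\le c$, the map $R$ runs in polynomial time and produces a legitimate {\sc $c$-Linear Ordering} instance.

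The heart of the argument is a per-assignment weight-preservation identity. Fix any permutation $\phi$ and any index $i$. Because $\phi$ is a bijection it assigns distinct values, hence induces exactly one ordering of $S_i$; therefore at most one of the $|\Pi_{f_i}|$ new constraints derived from $i$ is satisfied, and one is satisfied if and only if that induced ordering lies in $\Pi_{f_i}$, which is exactly the condition that $\phi$ satisfies the original constraint $i$. Summing the contributed weights $w_i$ over all $i$, the total weight of $\phi$ in $R(V,\mc{C},k)$ equals the weight of $\phi$ in $(V,\mc{C},k)$, for \emph{every} $\phi$.

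It remains to reconcile the two above-average thresholds. Taking the expectation of the weight-preservation identity over a uniform random permutation gives $\rho' W' = \rho W$, where $W'=\sum_i |\Pi_{f_i}|\,w_i$ is the total weight of the new instance and $\rho'$ its random-permutation threshold. Both the source problem and {\sc $c$-Linear Ordering} are permutation CSPs of arity at most $c$, so both use the same additive term $k/(c!4^c)$ with the same $k$. Combining this with exact weight preservation, a permutation $\phi$ achieves weight at least $\rho W + k/(c!4^c)$ in the original instance if and only if it achieves weight at least $\rho' W' + k/(c!4^c)$ in $R(V,\mc{C},k)$; maximizing over $\phi$ yields the claimed yes-instance equivalence.

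The only genuinely delicate point is this threshold matching, and it is where I would focus the write-up: one must confirm that the reduction is parameter-\emph{preserving} rather than merely parameter-bounded. This hinges on the exactness of the weight identity, which forces $\rho W=\rho' W'$ for free and lets $k$ pass through unchanged, so no rescaling of the parameter (and hence no adjustment of the normalizing constants $c!4^c$) is needed. Everything else—the polynomial running time and the arity bound—is routine bookkeeping, and the generalization from the $c=3$, $|\mathbb{P}|=1$ case of \cite{gutinvan} to arbitrary fixed $c$ and arbitrary $\mathbb{P}$ is immediate, since the splitting is performed independently per constraint according to its own predicate $f_i$.
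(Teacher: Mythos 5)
Your proposal is correct and follows essentially the same route as the paper's proof: split each constraint $(f_i,S_i,w_i)$ into one identity-ordering constraint of weight $w_i$ per $\pi\in\Pi_{f_i}$ on the reordered tuple, observe that any permutation satisfies at most one of these (and exactly one iff it satisfies the original constraint), so weights are preserved per assignment, and hence the expected (average) weight is preserved as well, letting $k$ pass through unchanged. Your explicit statement of the identity $\rho' W' = \rho W$ is just a slightly more careful phrasing of the paper's remark that the expected satisfied fraction of the new constraints equals that of the original one.
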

\begin{proof}
From an instance $(V,\mc{C},k)$ of {\sc Max-$c$-Permutation CSP}, we construct an instance $(V,\mc{C}_0,k)$ of {\sc $c$-Linear Ordering} as follows. We shall express a constraint $(f_i,S_i,w_i)$ by a set of constraints $R(f_i,S_i,w_i)$ such that $(f_i,S_i,w_i)$ is satisfied if and only if exactly one of $R(f_i,S_i,w_i)$ is satisfied. Let $S_i$ be $(v_1,v_2, \ldots , v_{c'})$.

For every element $\pi$ of $\Pi_{f_i}=\{\pi \in \mathbb{S}_c: f_i(\pi)=1\}$, we add to the instance of {\sc $c$-Linear Ordering} the constraint $(P_{id}, (v_{\pi^{-1}(1)}, v_{\pi^{-1}(2)},\ldots , v_{\pi^{-1}(c')}),w_i)$. Here $P_{id}$ is a function mapping identity permutation to 1 and other permutations to 0. Notice that $(f_i,S_i,w_i)$ is satisfied if and only if exactly one of the constraints $(P_{id}, (v_{\pi^{-1}(1)}, v_{\pi^{-1}(2)},\ldots , v_{\pi^{-1}(c')}),w_i)$ for $\pi \in \Pi_{f_i}$ is satisfied. Hence the weight of a linear ordering $\phi$ remains the same in the original and transformed instances. Moreover, the expected satisfied fraction of the constraint $(f_i,S_i,w_i)$ is $|\Pi_{f_i}|/(c')!$ and the expected satisfied fraction of the new constraints is the same. Hence, the instance of {\sc Max-$c$-Permutation CSP} has a linear ordering of weight $\rho\cdot W + k$ if and only if the constructed instance of {\sc $c$-Linear Ordering} a linear ordering of weight $\rho\cdot W + k$.
\end{proof}

It is well-known that for every function $f:\{1,-1\}^n \rightarrow \mathbb{R}$ can be uniquely expressed as a multilinear polynomial $$ f(x)=\sum_{S\subseteq [n]} \hat{f}(S)\chi_S(x),$$ where $\hat{f}(S)$ is the fourier coefficient of $f$ on $S$, defined as $$\hat{f}(S) := \Ex_{x\in \{1,-1\}^n} [f(x)\chi_S(x)]$$ and the character function $\chi_S(x)$ is defined as $\chi_S(x) :=\prod_{i\in S}x_i$. Given the truth table of $f$, the fourier coefficients of $f$ can be computed via the inverse fourier transform~(one reference is \cite{ODonnell08}).

\section{Max-$c$-Lin-2 Above Average}\label{secaver}

We now turn to describing improved parameterized algorithms for maximum constraint satisfaction problems with a constant number of variables per constraint, including the problems of satisfying a maximum subset of linear equations and maximum CNF satisfiability. At the heart of our approach is a faster algorithm for $\text{\sc Max-$c$-Lin-2 Above Average}$ that can be applied in a general way to solve other CSPs.

\begin{theorem}\label{maxclin2} For every $c \geq 2$,  {\sc Max-$c$-Lin-$2$ Above Average} can be solved in $O(2^{(c(c+1)/2)k} \cdot m)$ time. \end{theorem}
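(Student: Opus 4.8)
My plan is to reduce the decision question to a statement about the maximum of a bounded-degree multilinear polynomial over the hypercube, to show that this maximum is automatically large once the number of surviving variables is large, and to dispatch the remaining (small) instances by brute force.

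\textbf{Step 1 (algebraic reformulation).} I encode each variable as $x_i\in\{-1,+1\}$. An equation $e_j$ with support $S_j$ and right-hand side $b_j\in\{0,1\}$ is satisfied by $x$ exactly when $\prod_{i\in S_j}x_i=(-1)^{b_j}$, so its contribution to the weight is $w_j\cdot\frac{1+(-1)^{b_j}\chi_{S_j}(x)}{2}$. Summing over $j$, the weight of $x$ equals $W/2+\tfrac12 f(x)$, where $f(x)=\sum_j w_j(-1)^{b_j}\chi_{S_j}(x)$. After collecting monomials with the same support (their signed weights add) and deleting vanishing or empty monomials, $f=\sum_S c_S\chi_S$ is an integer-coefficient multilinear polynomial with $c_S\neq 0$ and each $|S|\le c$, computable in polynomial time. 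Since the weight is $W/2+f(x)/2$, the instance is a yes-instance iff $\max_x f(x)\ge k$. I also delete every variable occurring in no monomial and let $n$ denote the number of remaining variables.

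\textbf{Step 2 (the key lemma: a linear variable bound).} The heart of the argument is to prove that if $n>c(c+1)k/2$ then $\max_x f(x)\ge k$, together with a polynomial-time procedure exhibiting a witness; equivalently $\max_x f(x)\ge 2n/(c(c+1))$. The route I would take is a local-search/first-moment argument. Starting from any assignment, I repeatedly flip a single variable whenever this strictly increases $f$; since $f$ is integer-valued and bounded, this terminates in polynomially many steps at an assignment $x^\star$. At such a local optimum the discrete derivative $\partial_i=\sum_{S\ni i}c_S\chi_S(x^\star)$ is nonnegative for every $i$. Because the terms $c_S\chi_S(x^\star)$ are nonzero integers, if all of them were negative their sum would be strictly negative; hence every variable lies in at least one \emph{profitable} monomial $S$ (one with $c_S\chi_S(x^\star)>0$). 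The profitable monomials thus cover all $n$ variables, each covering at most $c$ of them, while $f(x^\star)$ records their net surplus over the unprofitable monomials.

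\textbf{Main obstacle.} Converting this covering-plus-surplus information into the clean bound $n\le\binom{c+1}{2}f(x^\star)$ is the delicate point, and I expect it to be the crux. The naive estimate only gives $n\le c\cdot(\#\text{profitable monomials})\le c\big(f(x^\star)+N\big)$, where $N\ge0$ is the total weight of unprofitable monomials; since $N$ is uncontrolled, this does not bound $n$ by $f(x^\star)$, and indeed a plain second-moment bound yields only $\max f\ge\sqrt{n/c}$, the quadratic-kernel regime. To extract a bound \emph{linear} in $f(x^\star)$ one must exploit the local-optimality (flow-like) conditions $\partial_i\ge 0$ more globally, charging each variable to profitable monomials so that the unprofitable monomials are paid for along the way. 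The constant $c(c+1)/2=1+2+\cdots+c$ should emerge from a layered/amortized accounting in which a monomial of arity $d$ is responsible for at most $d$ newly covered variables before being discharged, summed over $d=1,\dots,c$; making this charging rigorous (possibly by iterating the local search after peeling off profitable monomials, or via a Hall-type matching) and pinning down the exact constant is the part requiring the most care.

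\textbf{Step 3 (finishing).} If Step 2 certifies a good assignment, I answer yes. Otherwise $n\le c(c+1)k/2$, and I solve the instance by brute force: enumerate all $2^n\le 2^{c(c+1)k/2}$ assignments to the surviving variables, evaluating the total satisfied weight of the \emph{original} equation list in $O(cm)=O(m)$ time per assignment. This costs $O(2^{c(c+1)k/2}\cdot m)$ overall, matching the claimed bound; the polynomial-time reductions of Step 1 and the polynomial-time local search of Step 2 are absorbed into this estimate.
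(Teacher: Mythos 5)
Your Steps 1 and 3 are sound and match the paper's framing (Step 1 is the standard Fourier encoding of a reduced instance, Step 3 is the brute-force case), but Step 2 --- which you yourself flag as the ``main obstacle'' --- is the entire content of the theorem, and it is left unproven. Worse, the local-search route you propose provably cannot establish it. Consider {\sc Max-2-Lin-2} encoding {\sc Max Cut} on a disjoint union of $t$ four-cycles, so $n=4t$, and take the assignment placing two adjacent vertices of each cycle on one side and the other two on the other side. Every vertex then has exactly one cut and one uncut incident edge, so flipping any single variable leaves $f$ unchanged: this is a single-flip local optimum, every variable is covered by a profitable monomial (so your covering conclusion holds), and yet $f(x^\star)=0$ while $n$ is arbitrarily large. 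Hence no bound of the form $n\le C_c\, f(x^\star)$ can hold at local optima, and the charging/amortized accounting you hope to extract from the conditions $\partial_i\ge 0$ does not exist: the information available at a local optimum is simply insufficient.

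The paper proves your Step 2 (its Theorem 3) by an entirely different mechanism, and this is the missing idea. Working directly with the equations rather than with an assignment, it greedily extracts a maximal disjoint collection $S_c$ of equations on $c$ variables, then, after deleting $var(S_c)$, a maximal disjoint collection $S_{c-1}$ of equations with $c-1$ surviving variables, and so on down to $S_1$. Either every $S_j$ has weight less than $k$, in which case maximality forces $|var(F)|<(c+(c-1)+\cdots+1)k=(c(c+1)/2)k$ and brute force applies, or some $S_j$ has weight at least $k$. In the latter case a linear-algebra argument finishes the proof: because the members of $S_j$ are pairwise disjoint and every equation has at most $j$ variables outside $var(S_c\cup\cdots\cup S_{j+1})$, a sum of two or more equations of $S_j$ would have more than $j$ such variables and so cannot coincide with any equation of $F$; consequently every non-degenerate equation outside $S_j$ is linearly independent of $S_j$, and a uniformly random assignment conditioned on satisfying all of $S_j$ satisfies each remaining equation with probability exactly $1/2$. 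This gives expected weight at least $(W-w(S_j))/2+w(S_j)\ge W/2+k/2$, and the assignment is derandomized by conditional expectations. This disjointness-plus-linear-independence argument is what converts ``many variables'' into ``weight $k$ above average,'' and it has no analogue in your proposal.
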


In \cite{vass2006}, the authors gave a ``hybrid algorithm'' for the unweighted problem {\sc Max-E3-Lin-$2$} (where exactly three variables appear in each equation), with the property that, after a polynomial time test of the instance, the algorithm either outputs an assignment satisfying $(1/2 + \eps) m$ equations in polynomial time, or outputs the optimal satisfying assignment in $2^{O(\eps m)}$ time. The algorithm works by finding a maximal subset of equations such that every pair of equations share no variables; based on the size of this set, the hybrid algorithm decides to either approximately solve the instance or solve it exactly. Our algorithm is in a similar spirit, but requires several modifications to yield a parameterized algorithm for the weighted case, to deal with any $c \geq 2$, and to deal with ``mixed'' equations that can have different numbers of variables.

Let $F$ be a set of equations over $\mathbb{F}_2$, where each equation $e$ contains at most $c$ variables and has a positive integral weight $w(e)$. For a single equation $e \in F$, let $var(e)$ be the set of all variables appearing in $e$. Let $var(F) = \bigcup_{e \in F} var(e)$. For a set of equations $F'$, the weight $w(F')$ is the sum of weights $w(e)$ over $e\in F'$. The {\em weight of an assignment} is the total weight of equations that are satisfied by the assignment.

Note that {\sc Max-2-Lin-2 Above Average} is a generalization of {\sc Max Cut Above Average} on weighted graphs: by simulating each edge $\{u,v\}$ of weight $w$ with an equation $x_u + x_v = 1$ of weight $w$, the {\sc Max-2-Lin-2} problem easily captures {\sc Max Cut}.

We assume that the given instance is {\em reduced} in the sense that there is no pair of equations $e,e'$ with $e\equiv e'+1 \pmod 2$. (Such an equation $e$ is said to be {\em degenerate} in \cite{vass2006}.) If such a pair exists, one can remove the equation of lesser weight (call it $e'$) and subtract $w(e')$ from $w(e)$. Note the weight of every variable assignment has now been subtracted by $w(e')$.

\begin{proofof}{Theorem~\ref{maxclin2}} It is convenient to view an equation $e$ as a set $var(e)$. We first find a maximal independent (i.e. disjoint) collection $S_c \subseteq F$ of $c$-sets. More precisely, we treat each equation as a set of variables, ignore those sets of cardinality less than $c$, and find a maximal disjoint set over the $c$-sets using the standard greedy algorithm. All remaining equations in $F$ now have at most $c-1$ variables if we remove all occurrences of variables in $var(S_c)$ from $F$.

Next, we pick another collection $S_{c-1}\subseteq F$ of sets with the property that, after we remove all variables in $var(S_c)$ from $F$, $S_{c-1}$ forms a maximally independent collection of $(c-1)$-sets in the remaining set system. In general, for $j=c-2$ down to $1$, once the variables in $var(S_c \cup \cdots \cup S_{j+1})$ have been removed from the remaining equations, a maximal independent set of $j$-sets is chosen greedily, and we set $S_j$ to be a collection of corresponding original sets in $F$ (with the variables in $var(S_c \cup \cdots \cup S_{j+1})$ added back). We continue until $S_1$, in which each set in the collection has exactly one variable after those in $var(S_c \cup \cdots \cup S_2)$ have been removed. For convenience, let $S_{c+1}=var(S_{c+1})=\emptyset$. By properties of maximal disjoint sets, we have:

\begin{observation}\label{atmostj}
For every $1\leq j \leq c$, eliminating the variables appearing in $var(S_c \cup \cdots \cup S_{j+1})$ leaves at most $j$ variables in every equation of $F$.
\end{observation}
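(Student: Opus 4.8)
The plan is to prove the statement by downward induction on $j$, exploiting the maximality of each collection $S_{j+1}$ together with the nested structure of the variable-elimination process. First I would fix notation for the intermediate set systems: for each $j$ let $F_j$ denote the collection of equations obtained from $F$ by deleting every occurrence of a variable in $var(S_c \cup \cdots \cup S_{j+1})$, so that $F_c = F$ (recall $S_{c+1} = \emptyset$) and the target statement is precisely the claim that every equation of $F_j$ has at most $j$ variables. The one structural remark that makes the induction work is that these systems are nested in the obvious way: since $var(S_c \cup \cdots \cup S_{j+1}) = var(S_c \cup \cdots \cup S_{j+2}) \cup var(S_{j+1})$, the system $F_j$ is obtained from $F_{j+1}$ by deleting exactly the variables of $S_{j+1}$.

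The base case $j = c$ is immediate, as $F_c = F$ and every equation of $F$ has at most $c$ variables by hypothesis. For the inductive step I would assume every equation of $F_{j+1}$ has at most $j+1$ variables and argue by contradiction. Suppose some equation $e$ still carries at least $j+1$ variables after the variables in $var(S_c \cup \cdots \cup S_{j+1})$ have been removed. By the induction hypothesis $e$ already has at most $j+1$ variables as a member of $F_{j+1}$; combined with the supposition, $e$ therefore has \emph{exactly} $j+1$ variables in $F_{j+1}$, and the subsequent deletion of $var(S_{j+1})$ removed none of them. In other words, viewed inside $F_{j+1}$, the equation $e$ is a $(j+1)$-set disjoint from every set of $S_{j+1}$.

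This is exactly where the maximality is used, and it is really the only substantive point. By construction $S_{j+1}$ was chosen as a \emph{maximal} disjoint family of $(j+1)$-sets in the system $F_{j+1}$, so no $(j+1)$-set disjoint from all of them can survive; adjoining $e$ would produce a strictly larger disjoint family, a contradiction. Hence every equation of $F_j$ has at most $j$ variables, closing the induction. I do not anticipate any genuine difficulty here: the entire content is the bookkeeping linking the definition of $S_{j+1}$ (a maximal disjoint family in the system where precisely the higher-level variables have already been removed) to the claim about $F_j$. The single subtlety to handle carefully is the deduction that a surviving high-arity equation must have been left completely untouched by the last round of deletions, since it is this fact that forces the disjointness contradiction against maximality.
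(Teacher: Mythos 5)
Your proof is correct and matches the paper's intent: the paper states this observation without proof, as an immediate consequence of the maximality of each greedily chosen disjoint family (``by properties of maximal disjoint sets''), and your downward induction with the contradiction against the maximality of $S_{j+1}$ is exactly the formalization of that reasoning. The one subtlety you flag — that a surviving $(j+1)$-variable equation must be untouched by the deletion of $var(S_{j+1})$ and hence disjoint from every set in the family — is indeed the whole content, and you handle it correctly.
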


Now, either {\bf (1)} $w(S_j) < k$ for every $j=1,\ldots,c$, or {\bf (2)} there is a $j$ such that $w(S_j) \geq k$.

Case {\bf (1)} is easily handled: for every $j$, each equation in $S_j$ contains $j$ variables which do not appear in $S_c \cup \cdots \cup S_{j+1}$. Hence, $|var(F)|=|var(\bigcup_{i=1}^c S_i)|< ck+(c-1)k + \cdots + k < (c(c+1)/2)\cdot k$. By trying all $O(2^{(c(c+1)/2)k})$ assignments to $var(F)$, we can find an optimal assignment for $F$.

Case {\bf (2)} is more delicate and is handled by the two claims below. We will show that in this case, $F$ is a yes-instance of the problem and we can efficiently recover a solution for it. Recall an equation $e \in F$ is non-degenerate if there is no $e' \in F$ such that $e \equiv e' + 1 \pmod 2$. As mentioned earlier, we may assume without loss of generality that every equation in $F$ is non-degenerate.

\begin{claim}\label{maxclin2clm2}
For every $1\leq j \leq c$, a random assignment satisfying all equations in $S_j$ will satisfy every non-degenerate equation in $F-S_j$ with probability 1/2. Moreover, we can output such a random assignment in polynomial time.
\end{claim}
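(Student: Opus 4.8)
The plan is to exhibit an explicit distribution over assignments that always satisfies $S_j$, and then to show by a parity argument over $\mathbb{F}_2$ that every non-degenerate equation of $F-S_j$ is balanced under it. First I would set up pivot variables. Write $U = var(S_c \cup \cdots \cup S_{j+1})$. By construction each $e \in S_j$ reduces, after deleting $U$, to a set $R_e$ of exactly $j$ variables, and the $R_e$ are pairwise disjoint (and disjoint from $U$). For each $e \in S_j$ I would fix one pivot $p_e \in R_e$; then $p_e \in var(e)$, but $p_e \notin var(e')$ for every other $e' \in S_j$. The random assignment is generated by drawing every non-pivot variable independently and uniformly, and afterwards choosing each $p_e$ (which occurs in $e$) so as to satisfy $e$. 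Because $p_e$ occurs in no other equation of $S_j$, these choices do not interfere, so all of $S_j$ is satisfied, and the procedure clearly runs in polynomial time.

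Next I would analyze an arbitrary $f \in F-S_j$. Substituting the affine expression that defines each $\phi(p_e)$ (a function over $\mathbb{F}_2$ of the non-pivot variables of $var(e)$) into the left-hand side of $f$ turns ``$f$ is satisfied'' into a single affine equation $L(\psi)=b_f$ in the free (non-pivot) variables $\psi$. For uniform $\psi$, this event has probability exactly $1/2$ precisely when the linear part of $L$ is not identically zero. Representing each equation by its variable-indicator vector in $\mathbb{F}_2^n$, the linear part of $L$ is $\mathbf{w} := \mathbf{v}_f \oplus \bigoplus_{e \in E_f} \mathbf{v}_e$, where $E_f = \{e \in S_j : p_e \in var(f)\}$; here the pivot coordinates cancel automatically, since no $\mathbf{v}_e$ contains a pivot other than $p_e$. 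Thus it suffices to prove $\mathbf{w} \neq \mathbf{0}$.

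The crux, and the step I expect to be the main obstacle, is establishing $\mathbf{w} \neq \mathbf{0}$; this is exactly where Observation~\ref{atmostj} and non-degeneracy enter. Suppose $\mathbf{w} = \mathbf{0}$. Restricting to the coordinates outside $U$, the disjointness of the $R_e$ makes $\bigoplus_{e \in E_f} \mathbf{v}_e$ a genuine disjoint union, so the non-$U$ part of $f$ would equal $\bigsqcup_{e \in E_f} R_e$, a set of size $j\cdot|E_f|$. But by Observation~\ref{atmostj} the non-$U$ part of $f$ has at most $j$ variables, forcing $|E_f| \leq 1$. If $|E_f|=0$ then $\mathbf{w}=\mathbf{v}_f \neq \mathbf{0}$, since $f$ contains a variable, a contradiction. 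If $|E_f|=1$, say $E_f=\{e^*\}$, then $\mathbf{w}=\mathbf{0}$ gives $\mathbf{v}_f=\mathbf{v}_{e^*}$, i.e. $var(f)=var(e^*)$; but then $f$ and $e^*$ are either identical or satisfy $f \equiv e^*+1 \pmod 2$, both excluded once we assume, as we may after reduction, that distinct equations have distinct variable sets and that the instance is non-degenerate.

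Hence $\mathbf{w} \neq \mathbf{0}$ in all cases, so $L$ is a non-constant affine form and $f$ is satisfied with probability exactly $1/2$. I would close by noting that the counting bound ($|E_f|\leq 1$) is precisely what the greedy maximal construction buys us, and that without non-degeneracy the residual case $var(f)=var(e^*)$ genuinely fails (such an $f$ would be satisfied with probability $1$), which is why that hypothesis is essential rather than cosmetic.
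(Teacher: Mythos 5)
Your proof is correct and follows essentially the same route as the paper's: the same construction of the random assignment (free variables uniform, one forced ``pivot'' per equation of $S_j$, which works because the residual $j$-sets are disjoint), the same use of Observation~\ref{atmostj} plus disjointness to force the relevant combination to have size at most one, and the same appeal to non-degeneracy (and distinctness of equations) in the size-one case. The only cosmetic difference is that you substitute the pivot expressions directly into $f$ and need to exclude just the single combination $E_f$, whereas the paper proves full linear independence of $f$ from $S_j$ and then counts solutions via a rank argument; both ultimately rest on the fact that a nontrivial affine form over $\mathbb{F}_2$ is balanced under a uniform assignment.
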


\proof To prove the first part of the claim, it suffices to show that no equation $e\in F-S_j$ (or its negation $e+1$) can be expressed as a linear combination of one or more equations in $S_j$. Put another way, we will show that every equation in $e \in F-S_j$ is {\em linearly independent} of the equations in $S_j$.

Suppose there are equations $e_1,\ldots,e_m$ from $S_j$ such that their summation (modulo 2) results in a variable subset that is equal to the set of variables in another equation $e\in F$. That is, viewing $e_1,\ldots,e_m$ and $e$ as indicator $n$-bit vectors (one bit for each of the $n$ variables, omitting the constant terms in the equations), we have $e = \sum_{i=1}^m e_i \pmod 2$. Recall that every equation in $S_j$ has $j$ variables which do not appear in $var(S_c \cup \cdots \cup S_{j+1})$, and every pair of equations in $S_j$ involves disjoint sets of variables, by construction. Hence, if $m>1$, then the equation $e$ (composed of variables from $e_1,\ldots,e_m$) has more than $j$ variables which do not appear in $var(S_c \cup \cdots \cup S_{j+1})$, which is impossible by Observation~\ref{atmostj}. Therefore $m=1$, and every subset $\{e_1,\ldots, e_m\}$ of equations from $S_j$ whose modulo sum is the same as another equation $e\in F-S_j$ has cardinality $1$. But then the equation $e$ is degenerate, which is a contradiction to the non-degeneracy assumption. Therefore no non-degenerate equation in $F$ (or its negation) can be represented as a linear combination of one or more equations from $S_j$.

Now, given that every non-degenerate equation in $e \in F-S_j$ is linearly independent of the equations in $S_j$, we claim that a random assignment that is consistent with the equations in $S_j$ will satisfy $e$ with probability $1/2$. This is a simple consequence of linear algebra over $\mathbb{F}_2$. Put the system of equations $S_j$ in the form $Ax = b$, where $A \in \mathbb{F}_2^{|S_j| \times n}$, $x \in \mathbb{F}_2^n$, and $b \in \mathbb{F}_2^{|S_j|}$. Let $e \in F-S_j$. Define $B_e \in \mathbb{F}_2^{(|S_j|+1) \times n}$ to be identical to $A$ in its first $|S_j|$ rows, and in the last row, $B_e$ contains the indicator vector for the variables of $e$. Define $c_e \in \mathbb{F}_2^{|S_j|+1}$ to be identical to $b$ in its first $|S_j|$ components, and $c$ contains the constant term of $e$ in its last component. Saying that $e \in  F-S_j$ is linearly independent of $S_j$ is equivalent to saying $rowrank(B_e) = rowrank(A)+1$, and the set of solutions to $Ax = b$ contains the set of solutions to $B_e x = c_e$. The number of solutions to a system of rank $r$ is $2^{n-r}$. Therefore a uniform random variable assignment that satisfies $Ax = b$ will also satisfy $B_e x = c_e$ with probability $1/2$.

Finally, we describe how to produce a uniform random assignment over all assignments that satisfy the equations in $S_j$. Produce a random assignment to the variables in $var(S_{c+1}\cup \cdots \cup S_{j+1})$, then produce a random assignment to those variables in the maximal independent collection of $j$-sets obtained after removing $var(S_{c+1}\cup \cdots \cup S_{j+1})$, in such a way that every equation in $S_j$ is satisfied. (Exactly one variable in each equation of $S_j$ will be ``forced'' to be a certain value, but note that none of these forced variables appear in more than one equation of $S_j$, by construction.) The remaining variables are set to 0 or 1 uniformly at random. Note that if $j=1$ and some equation $e\in S_1$ has $|var(e)|=1$, the assignment to the variable of $e$ is decided uniquely.\qed

\begin{claim}\label{maxclin2clm1}
If there is a $j$ with $w(S_j)\geq k$, then we can find an assignment with weight at least $W/2 + k/2$ in polynomial time.
\end{claim}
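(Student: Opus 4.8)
The plan is to combine the probabilistic guarantee from Claim~1 with a concentration or variance argument to show that $S_j$ alone already forces the weight above $W/2$ by a $k/2$ margin. First I would produce a uniform random assignment $\phi$ consistent with all equations in $S_j$, as guaranteed by Claim~1. By construction this assignment satisfies every equation in $S_j$ with probability $1$, so these equations contribute exactly $w(S_j)$ to the weight of $\phi$. For every non-degenerate equation $e \in F - S_j$, Claim~1 tells us that $\phi$ satisfies $e$ with probability exactly $1/2$, so $e$ contributes $w(e)/2$ to the \emph{expected} weight. Therefore, by linearity of expectation,
\[
\Ex_\phi[\text{weight of }\phi] = w(S_j) + \tfrac{1}{2}\,w(F - S_j) = w(S_j) + \tfrac{1}{2}\bigl(W - w(S_j)\bigr) = \tfrac{W}{2} + \tfrac{w(S_j)}{2}.
\]

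Since we are in Case~(2), we have $w(S_j) \geq k$, and hence the expected weight of $\phi$ is at least $W/2 + k/2$. Because the expected weight meets this bound, there must exist at least one assignment in the support of $\phi$ achieving weight at least $W/2 + k/2$; this already shows $F$ is a yes-instance. To obtain such an assignment \emph{deterministically} in polynomial time, I would derandomize using the method of conditional expectations: the conditional expectation of the weight, given a partial assignment to the free variables, is easy to evaluate since each equation is satisfied with a conditional probability that is $0$, $1/2$, or $1$ depending on whether it has been fully determined and, if so, whether it holds. We fix the free variables (those outside $var(S_j)$ and the unforced variables within) one at a time, always choosing the value that does not decrease the conditional expectation, while keeping the assignment consistent with $S_j$ via the forcing described in Claim~1. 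The final assignment has weight at least the initial expectation $W/2 + w(S_j)/2 \geq W/2 + k/2$.

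The main technical point to handle carefully is that the expectation computation above relies on every equation in $F - S_j$ being non-degenerate, so that the probability-$1/2$ guarantee of Claim~1 applies uniformly; this is exactly why the reduction to a reduced (non-degenerate) instance at the start of the proof was necessary, and I would invoke that assumption explicitly. A secondary point is to verify that the conditional-expectation evaluation remains polynomial-time: there are at most $m$ equations and $O(n)$ free variables, and each conditional expectation recomputation touches each equation once, giving a polynomial bound overall. No concentration inequality is actually needed here, since the averaging argument already places an assignment above the threshold; the only real work is the routine but careful derandomization that respects the consistency constraints imposed by $S_j$.
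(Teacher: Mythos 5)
Your proposal is correct and follows essentially the same route as the paper's proof: apply Claim~\ref{maxclin2clm2} to get the probability-$1/2$ guarantee for non-degenerate equations outside $S_j$, compute the expected weight $w(S_j) + \tfrac{1}{2}(W - w(S_j)) \geq W/2 + k/2$ by linearity of expectation, and derandomize by the method of conditional expectations. The paper states this in three sentences; your version merely spells out the expectation calculation and the derandomization details (including the reliance on non-degeneracy) more explicitly.
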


\proof Suppose that $j\geq 1$ is the largest integer with $w(S_j)\geq k$. By Claim~\ref{maxclin2clm2}, a random assignment satisfying all equations in $S_j$ will satisfy every other non-degenerate equation with probability 1/2. Hence the weight of such an assignment is at least $(W-w(S_j))/2+w(S_j)\geq W/2+k/2$ on average. An assignment can also be found deterministically using conditional expectation.
\qed

This completes the proof of Theorem~\ref{maxclin2}.\end{proofof}

The above proof shows that the following stronger statement is also true.

\begin{theorem}\label{maxclin2yes} For every $c \geq 2$, let $I$ be an instance of {\sc Max-$c$-Lin-$2$ Above Average}. If $|var(F)|\geq (c(c+1)/2)k$, then $I$ is an yes-instance and an assignment satisfying equations with at least $W/2 + k/2$ weight can be found in polynomial time.\end{theorem}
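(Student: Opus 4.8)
The plan is to recognize that this statement is nothing more than the contrapositive of Case {\bf (1)} in the proof of Theorem~\ref{maxclin2}, so I would reuse that machinery essentially verbatim rather than introduce anything new. First I would preprocess the instance $I$ into reduced (non-degenerate) form, exactly as described just before the proof of Theorem~\ref{maxclin2}. This runs in polynomial time, leaves $var(F)$ unchanged (each deleted equation $e'$ shares its \emph{entire} variable set with the surviving equation $e \equiv e'+1$, so no variable disappears), and shifts both $W$ and the weight of every assignment by the same constant; hence it affects neither the hypothesis $|var(F)|\geq (c(c+1)/2)k$ nor the target $W/2+k/2$.

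Next I would run the same greedy procedure to build the maximal disjoint collections $S_c, S_{c-1},\ldots,S_1$ in polynomial time. The only quantitative ingredient I need is the counting already recorded in Case {\bf (1)}: each of the $|S_j|$ equations in $S_j$ contributes exactly $j$ variables not lying in $var(S_c\cup\cdots\cup S_{j+1})$, these contributions are disjoint, and by the argument there every variable of $F$ is accounted for, so $|var(F)| = \sum_{j=1}^{c} j\,|S_j|$. Because all weights are positive integers we have $|S_j|\leq w(S_j)$. Consequently, if $w(S_j) < k$ held for \emph{every} $j$, then $|var(F)| = \sum_{j=1}^{c} j\,|S_j| \leq \sum_{j=1}^{c} j\,w(S_j) < \sum_{j=1}^{c} jk = (c(c+1)/2)k$.

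The decisive step is the contrapositive of this inequality. Under the hypothesis $|var(F)|\geq (c(c+1)/2)k$, the bound above cannot hold for all $j$, so there must exist some $j$ with $w(S_j)\geq k$; that is, we are necessarily in Case {\bf (2)}. I would then invoke Claim~\ref{maxclin2clm1} directly, which under exactly this condition produces, in polynomial time, an assignment of weight at least $W/2+k/2$ (the construction being the random assignment of Claim~\ref{maxclin2clm2} derandomized by conditional expectations). This single assignment simultaneously witnesses that $I$ is a yes-instance and supplies the required solution, completing the argument.

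I expect no genuine obstacle, since every component is already in hand from Theorem~\ref{maxclin2}; the statement merely isolates the ``large variable count'' regime of that proof. The only points deserving care are the bookkeeping that exploits positive integrality of the weights (so that the variable count is controlled by $w(S_j)$ and not merely by $|S_j|$) and the verification that the non-degeneracy preprocessing preserves both $var(F)$ and the threshold $W/2+k/2$.
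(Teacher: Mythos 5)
Your main line of reasoning is exactly the paper's own: Theorem~\ref{maxclin2yes} is stated in the paper with no separate proof precisely because it is the contrapositive of Case {\bf (1)} of Theorem~\ref{maxclin2}, and your accounting is the right one --- the greedy construction gives $|var(F)|=\sum_{j=1}^{c} j\,|S_j|$, positive integrality of the weights gives $|S_j|\leq w(S_j)$, so $|var(F)|\geq (c(c+1)/2)k$ forces some $w(S_j)\geq k$, at which point Claim~\ref{maxclin2clm1} produces the required assignment in polynomial time. Up to this point your proof and the paper's intended proof coincide.

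There is, however, one step that fails: the claim that the degeneracy preprocessing ``leaves $var(F)$ unchanged.'' If a degenerate pair has \emph{equal} weights, $w(e)=w(e')$, then removing $e'$ and subtracting leaves $e$ with weight zero; since the problem definition requires positive integer weights (and your key inequality $|S_j|\leq w(S_j)$ needs every surviving equation to have weight at least $1$), the equation $e$ must be discarded as well, so both equations --- and possibly all of their variables --- disappear. This is not a pedantic corner case, because the theorem is in fact \emph{false} if $|var(F)|$ is measured before reduction: take $c=2$, $k=1$ and the four unit-weight equations $x_1+x_2=0$, $x_1+x_2=1$, $x_3+x_4=0$, $x_3+x_4=1$. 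Here $|var(F)|=4\geq 3=(c(c+1)/2)k$, yet every assignment satisfies exactly one equation from each pair and so has weight exactly $W/2=2<W/2+k/2$; the instance is a no-instance. Your argument breaks on this example in whichever way the zero-weight equations are treated: if they are deleted, $var$ of the reduced instance shrinks to $\emptyset$ and the hypothesis no longer transfers; if they are kept, $|S_j|\leq w(S_j)$ is violated. The paper avoids this by making reducedness a standing assumption \emph{before} Theorem~\ref{maxclin2}, so that $F$ in Theorem~\ref{maxclin2yes} implicitly denotes a reduced instance. Your proof becomes correct once you either add that hypothesis explicitly or restate the condition as $|var(F')|\geq (c(c+1)/2)k$ for the reduced instance $F'$; what cannot be salvaged is the assertion that reduction never deletes variables.
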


%%%%%%%%%%%%%%%%%%%%%%%%%%%%%%%%%%%%%%%%%%%%%%%%%%%%%%%%%%%%%%%%%%%%%%%%%%%%%%%%

Observe that the running time of our algorithm is optimal up to constant factors in the exponent, assuming the Exponential Time Hypothesis:

\begin{theorem}\label{eth} If {\sc Max-$3$-Lin-2 Above Average} can be solved in $O(2^{\eps k} 2^{\eps m})$ time for every $\eps > 0$, then 3SAT can be solved in $O(2^{\delta n})$ time for every $\delta > 0$, where $n$ is the number of variables. \end{theorem}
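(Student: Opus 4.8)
The plan is to prove the implication directly, by exhibiting a polynomial-time reduction from 3SAT to \textsc{Max-3-Lin-2 Above Average} that keeps \emph{both} the number of equations $m$ and the parameter $k$ linear in the size of the 3SAT instance, and then running the hypothesized $O(2^{\eps k}2^{\eps m})$ algorithm on the output. The subtlety that drives the whole argument is that the hypothesized running time charges $\eps$ to $m$ as well as to $k$, so a reduction that only controls $k$ is useless; I must control $m$ too, and in terms of the variable count $n$ rather than the clause count.

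First I would build the gadget, working over $\mathbb{F}_2$ with \textsf{True}$=1$. For a clause $C=(\ell_1\vee\ell_2\vee\ell_3)$ with literal values $a=\ell_1,b=\ell_2,c=\ell_3$, I introduce the seven equations $\lambda(a,b,c)=1$ ranging over the seven nonzero linear functionals $\lambda$ on $\mathbb{F}_2^3$, namely $a=1,\ b=1,\ c=1,\ a+b=1,\ a+c=1,\ b+c=1,\ a+b+c=1$, each of weight $1$; expressed in the (at most three) underlying original variables, a negated literal simply contributes an additive constant, so every equation has arity at most $3$. The key fact is purely linear-algebraic: for any nonzero $v\in\mathbb{F}_2^3$ exactly half of the eight linear functionals evaluate to $1$, and since the zero functional always evaluates to $0$, exactly four of the seven nonzero ones are satisfied; for $v=(0,0,0)$ none are satisfied. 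Hence, under any assignment to the original variables, a satisfied clause contributes exactly $4$ satisfied equations and a falsified clause contributes $0$.

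With $m'$ clauses the instance then has $m=7m'$ equations and total weight $W=7m'$, and the weight satisfied by an assignment $\phi$ is exactly $4\,s(\phi)$, where $s(\phi)$ is the number of satisfied clauses. Thus the formula is satisfiable iff some assignment satisfies at least $4m'=W/2+m'/2$ equations; setting $k=m'$ makes this precisely a yes-instance of \textsc{Max-3-Lin-2 Above Average}, and if the formula is unsatisfiable the best weight is at most $4(m'-1)<4m'$, so the reduction is faithful. Note $m$ and $k$ are both linear in the clause count, which is exactly what the gadget can guarantee but is not yet enough, since a general 3CNF may have $\Theta(n^3)$ clauses and then $2^{\eps m}$ is superexponential in $n$. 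To convert clause-linearity into $n$-linearity I would invoke the Sparsification Lemma of Impagliazzo, Paturi, and Zane: fixing the target $\delta>0$ and setting $\delta'=\delta/3$, 3SAT on $n$ variables reduces to a disjunction of at most $2^{\delta' n}$ 3CNF instances, each with at most $Cn$ clauses for a constant $C=C(\delta')$.

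Applying the gadget to each sparse instance gives $m\le 7Cn$ and $k\le Cn$, so $k+m\le 8Cn$; invoking the hypothesized algorithm with $\eps=\delta'/(8C)$ solves each in $O(2^{\eps(k+m)})=O(2^{\delta' n})$ time, and summing over the at most $2^{\delta' n}$ instances yields total time $2^{\delta' n}\cdot 2^{\delta' n}\cdot\poly(n)=2^{(2\delta/3)n}\poly(n)\le 2^{\delta n}$ for large $n$. Since $\delta$ was arbitrary, 3SAT would be solvable in $O(2^{\delta n})$ time for every $\delta>0$. I expect the main (really the only conceptual) obstacle to be the simultaneous control of $m$ and $k$ in terms of $n$: the gadget-counting is routine once phrased through the ``exactly half the nonzero functionals are $1$ on a nonzero vector'' fact, and the only remaining care is the quantifier order $\delta\to\delta'\to C\to\eps$ together with absorbing the polynomial factor into the exponent.
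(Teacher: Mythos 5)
Your proof is correct, and its skeleton coincides with the paper's: sparsification, a clause-by-clause reduction to {\sc Max-3-Lin-2 Above Average}, then the hypothesized algorithm with $\eps$ chosen as a function of the target $\delta$. The difference is in the middle step. The paper invokes its general CSP-to-linear-system reduction (Lemma~\ref{csp2lin}, proved via Fourier expansion of the payoff functions), and since that lemma is an equivalence at a given parameter value, the paper then tries every $k\in[1,m']$ to recover the exact optimum of the resulting system; you instead derive an explicit gadget --- the seven equations $\lambda(a,b,c)=1$ over the nonzero functionals, with the $4$-versus-$0$ count --- and observe that the single value $k=m'$ already decides satisfiability. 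In fact your gadget is precisely what Lemma~\ref{csp2lin} outputs on a 3CNF instance (all seven nonconstant Fourier coefficients of the ternary OR equal $-1/8$, giving unit weights), so the two reductions agree up to merging of identical or degenerate equations across clauses; what your derivation buys is self-containedness and a single call to the hypothesized algorithm per sparse instance instead of $m'$ calls, while the paper's buys reuse of a lemma it needs elsewhere anyway and the explicit constants of the improved sparsification lemma of Calabro et al.\ (clause bound $(1/\delta)^{c}n$, hence the clean setting $\eps=\delta^{c+1}$). One loose end on your side: sparsification can emit clauses of width $j<3$, for which the gadget should use the $2^{j}-1$ nonzero functionals (count $2^{j-1}$ versus $0$); since $W/2+m'/2=\sum_{C}2^{j_C-1}$ still holds in that case, the choice $k=m'$ continues to work, so this is worth a sentence in the write-up but is not a gap.
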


\begin{proof} First, by the improved Sparsification Lemma of~\cite{calabro2006}, for every $\delta > 0$ we can reduce 3SAT on $n$ variables and $m$ clauses in $2^{\delta n}$ time to 3SAT on $n$ variables and $m'=(1/\delta)^{c} n$ clauses, for some fixed constant $c > 1$. This 3SAT instance on $n$ variables and $m'$ clauses can further be reduced to Max-$3$-Lin-2 on $n$ variables and $O(m')$ clauses using the reduction of Lemma~\ref{csp2lin} (proved below). Provided that we can determine whether $m'/2+k/2$ equations can be satisfied in $2^{\eps k} 2^{\eps m'}$ time, then by trying each $k$ in the interval $[1,m']$ we can solve the Max-$3$-Lin-2 instance exactly in at most $2^{2\eps m'} \leq O(2^{2\eps (1/\delta)^c n})$ time.

This results in an $O(2^{\delta n + 2\eps (1/\delta)^c n})$ algorithm for 3SAT. Setting $\eps = \delta^{c+1}$, we obtain $O(2^{3 \delta n})$ time. As this reduction works for every $\delta > 0$, the conclusion follows. \end{proof}

%\section{Applications to Boolean and Permutation CSPs}

\section{Boolean MAX-$c$-CSP Above Average}

%%%%%%%%%%%%%%%%%%%%%%%%%%%%%%%%%%%%%%%%%%%%%%%%%%%%%%%%%%%%%%%%%%%%%%%%%%%%%%%%

To apply our algorithm to general CSPs, we use the following reduction.

\begin{lemma}[\cite{alon2010},\cite{swat2010}] There is a polynomial time reduction from {\sc Max-$c$-Csp Above Average} with $n$ variables and parameter $k$ to {\sc Max-$c$-Lin-2 Above Average} with $n$ variables and parameter $k$. \label{csp2lin}
\end{lemma}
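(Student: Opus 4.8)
The plan is to realize the reduction through the multilinear (Fourier) expansion of the payoff functions set up in the Preliminaries. First I would write each payoff function $f_i:\{-1,+1\}^{c'}\to\{0,1\}$ as $f_i(x)=\sum_{T\subseteq S_i}\hat{f}_i(T)\,\chi_T(x)$, so that the weight of an arbitrary assignment $\phi$ becomes
$$\sum_{i=1}^m w_i f_i(\phi(S_i)) = \sum_i w_i \sum_{T\subseteq S_i}\hat{f}_i(T)\chi_T(\phi) = \rho W + \sum_{\emptyset\neq T} c_T\,\chi_T(\phi),$$
where I set $c_T := \sum_{i:\,T\subseteq S_i} w_i\,\hat{f}_i(T)$ and use that the $T=\emptyset$ terms contribute exactly $\sum_i w_i\hat{f}_i(\emptyset)=\rho W$, since $\hat{f}_i(\emptyset)$ is the satisfying fraction of constraint $i$. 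Thus the ``above average'' excess of $\phi$ is precisely $E(\phi):=\sum_{\emptyset\neq T} c_T\chi_T(\phi)$, a weighted combination of characters.

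Next I would convert each character into a single linear equation over $\mathbb{F}_2$. Writing each variable as $x_j=(-1)^{y_j}$ with $y_j\in\{0,1\}$ gives $\chi_T(\phi)=(-1)^{\sum_{j\in T}y_j}$, so $\chi_T(\phi)=+1$ exactly when $\sum_{j\in T}y_j\equiv 0\pmod 2$. Each coefficient $\hat{f}_i(T)$ is an integer multiple of $2^{-c'}$, so after scaling by $2^c$ the numbers $c_T':=2^c c_T$ are integers. For every $T$ with $c_T'\neq 0$ I would add one equation on the variable set $T$ (which has $1\le|T|\le c$ variables, so arity at most $c$ is respected): the equation $\sum_{j\in T}y_j=0$ with weight $|c_T'|$ when $c_T'>0$, and $\sum_{j\in T}y_j=1$ with weight $|c_T'|$ when $c_T'<0$. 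Collecting all constraints that share a given $T$ into the single coefficient $c_T$ means at most one equation is produced per $T$, yielding at most $m\cdot 2^c$ equations, the same $n$ variables, and a polynomial-time construction.

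The crux is the exact threshold bookkeeping. Using $\mathbf{1}[\sum_{j\in T}y_j=0 \text{ satisfied}]=(1+\chi_T(\phi))/2$ and the mirror identity for the ``$=1$'' case, a short computation shows that in either sign case the contribution of the $T$-equation to (satisfied weight) $-\,|c_T'|/2$ equals $\tfrac12 c_T'\chi_T(\phi)$. Summing over $T$, the total Lin-2 weight $L(\phi)$ obeys $L(\phi)-W'/2=\tfrac12\sum_T c_T'\chi_T(\phi)=2^{c-1}E(\phi)$, where $W'=\sum_T|c_T'|$. Hence $L(\phi)\ge W'/2+k/2$ if and only if $E(\phi)\ge k/2^c$, that is, if and only if $\phi$ has CSP weight at least $\rho W+k/2^c$. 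Since this equivalence holds for \emph{every} $\phi$, optimal assignments correspond and yes-instances map to yes-instances with the parameter $k$ unchanged.

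I expect the main obstacle to be precisely this scaling step: one must verify that the factor $2^c$ clearing the Fourier denominators, the factor $1/2$ arising from each equation's random-assignment baseline, and the $k/2^c$ versus $k/2$ thresholds all cancel so as to leave the parameter $k$ untouched. Everything else---linearity of the Fourier expansion, integrality of the scaled coefficients $c_T'$, and the bounds on arity and on the number of equations---is routine.
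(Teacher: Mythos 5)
Your proposal is correct and follows essentially the same route as the paper's own proof in Appendix~A: both expand the payoff functions as multilinear (Fourier) polynomials, observe that the weighted constant terms sum to $\rho W$, scale the remaining coefficients by $2^c$ to make them integral, turn each surviving monomial into one $\mathbb{F}_2$-equation whose right-hand side is determined by the coefficient's sign, and verify that Lin-2 weight $W'/2+k/2$ corresponds exactly to CSP weight $\rho W + k/2^c$. Your bookkeeping via the indicator identity $\mathbf{1}[\text{satisfied}]=(1\pm\chi_T(\phi))/2$ is just a more explicit rendering of the paper's closing observation that weight $\geq W/2+k/2$ is equivalent to (satisfied minus falsified weight) $\geq k$.
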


The proof of the lemma for unweighted $c$-CSP is sketched in~\cite{alon2010}, and a full proof is given in~\cite{swat2010}. Here we give an alternative proof which also covers the weighted case. (Although we are confident that the proofs in~\cite{alon2010,swat2010} also extend to the weighted case, we include a proof in Appendix~\ref{appendixproof} for completeness.)

%Let us sketch the proof of Lemma~\ref{csp2lin} for completeness.

\begin{theorem}\label{maxccsp} For every $c \geq 2$, {\sc Max-$c$-CSP Above Average} can be solved in $O(2^{(c(c+1)/2)k} \cdot m)$ time.\end{theorem}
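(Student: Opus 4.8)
The plan is to combine the two main ingredients already established in the excerpt: the reduction of Lemma~\ref{csp2lin}, which takes an instance of {\sc Max-$c$-CSP Above Average} on $n$ variables with parameter $k$ to an instance of {\sc Max-$c$-Lin-2 Above Average} on the same $n$ variables and the same parameter $k$ in polynomial time, and the algorithm of Theorem~\ref{maxclin2}, which solves {\sc Max-$c$-Lin-2 Above Average} in $O(2^{(c(c+1)/2)k}\cdot m)$ time. Composing these two directly gives the claimed bound. First I would invoke Lemma~\ref{csp2lin} to produce, in polynomial time, an equivalent {\sc Max-$c$-Lin-2} instance $F$ whose yes/no answer coincides with that of the original CSP instance, noting that the reduction preserves both the arity bound $c$ and the parameter $k$.

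Next I would run the algorithm of Theorem~\ref{maxclin2} on $F$. Since $F$ has at most $c$ variables per equation and parameter $k$, Theorem~\ref{maxclin2} decides it in $O(2^{(c(c+1)/2)k}\cdot m')$ time, where $m'$ is the number of equations produced by the reduction. The answer returned is, by the correctness clause of Lemma~\ref{csp2lin}, exactly the answer for the input CSP instance. The only quantitative point to verify is that $m'$ is at most linear in the size of the original instance, so that the $O(2^{(c(c+1)/2)k}\cdot m)$ form of the running time is genuinely preserved; this follows because the reduction replaces each of the $m$ constraints by a bounded number of linear equations (the number depending only on $c$, through the Fourier expansion of each payoff function over at most $c$ variables), so $m' = O(m)$ for fixed $c$.

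I do not anticipate a serious obstacle here, since the theorem is essentially a corollary obtained by chaining a reduction with a solver. The one place that warrants care is confirming that Lemma~\ref{csp2lin} really is parameter- and arity-preserving in the weighted setting, so that the complexity parameters feeding into Theorem~\ref{maxclin2} are the intended $c$ and $k$ rather than inflated versions; this is exactly the content of the weighted proof of Lemma~\ref{csp2lin} deferred to the appendix. Once that is granted, no further estimates are needed. I would also remark that this immediately yields the $O(k)$-variable bikernel claimed in the introduction, since Theorem~\ref{maxclin2yes} guarantees that any instance of the reduced {\sc Max-$c$-Lin-2} problem with at least $(c(c+1)/2)k$ variables is a yes-instance solvable in polynomial time, and the reduction leaves the variable set unchanged.
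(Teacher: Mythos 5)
Your proposal is correct and follows essentially the same route as the paper's own proof: apply the reduction of Lemma~\ref{csp2lin} to obtain a {\sc Max-$c$-Lin-2 Above Average} instance with $O(2^c \cdot m)$ equations, run the algorithm of Theorem~\ref{maxclin2} on it, and absorb the $2^c$ blow-up into the constant since $c$ is fixed, giving $O(2^{(c(c+1)/2)k}\cdot 2^c\cdot m)=O(2^{(c(c+1)/2)k}\cdot m)$. Your attention to the weighted, parameter-preserving form of Lemma~\ref{csp2lin} and to the equation-count bound is exactly the bookkeeping the paper performs.
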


\begin{proofof}{Theorem~\ref{maxccsp}} Using the reduction of Lemma~\ref{csp2lin}, reduce an instance of {\sc Max-$c$-CSP Above Average} with $m$ constraints to {\sc Max-$c$-Lin-2 Above Average} with $O(2^c \cdot m)$ equations. Using the algorithm of Theorem~\ref{maxclin2} we solve the obtained instance of {\sc Max-$c$-Lin-2 Above Average}. Thus we can determine if the given $c$-CSP has an assignment with weight at least $AVG + k$ in $O(2^{(c(c+1)/2)k}\cdot 2^c \cdot m) = O(2^{(c(c+1)/2)k}\cdot m)$ time. To finding an actual solution for {\sc Max-$c$-CSP Above Average}, we can simply use the transformation given in the proof of Lemma~\ref{csp2lin}.
\end{proofof}

Theorem~\ref{maxclin2yes} and Lemma~\ref{csp2lin} show in fact that every CSP admits a hybrid algorithm~\cite{vass2006}.

\begin{reminder}{Theorem~\ref{hybrid}} For every Boolean {\sc Max-$c$-CSP}, there is an algorithm  with the property that, for every $\eps > 0$, on any instance $I$, the algorithm outputs either:\begin{itemize}
\item an optimal solution to $I$ within $O^{\star}(2^{c(c+1) \eps m/2})$ time, or
\item a $(\rho+\eps/2^{c+1})$-approximation to $I$ within polynomial time, where $\rho$ is the expected fraction of weighted constraints satisfied by a uniform random assignment to the CSP.
\end{itemize}
\end{reminder}

\begin{proof} Given an instance $I$ of {\sc Max-$c$-CSP} with $m$ constraints, Lemma~\ref{csp2lin} shows that we can reduce $I$ to an weighted instance $I'$ of {\sc Max-$c$-Lin-2} with $O(2^c m)$ equations, in polynomial time, such that at least $\rho m + \delta k$ constraints can be satisfied in $I$ if and only if at least $W/2 + k/2$ weight of equations can be satisfied in $I'$, where $\delta \geq 1/2^c$ and depends on the underlying constraints. Now set $k = \eps m$ and run the algorithm of Theorem~\ref{maxclin2yes}.  If $|var(I')|\geq (c(c+1)/2)k$, then an assignment satisfying at least $W/2 + k/2 \geq W/2 + \eps m/2$ weight of equations can be found in polynomial time, hence we obtain an assignment for $I$ satisfying at least $\rho m + \delta \eps m/2$ constraints. Otherwise, exhaustive search over the $c(c+1)\eps m/2$ variables of $I'$ will uncover an exact solution to $I$ in $O^{\star}(2^{c(c+1) \eps m/2})$ time. \end{proof}

%{\bf \texttt{better delete this?} Q: Can we get a result for {\sc Max-Lin-2}? check if $m/2+ k$ equations can be satisfied, in FPT?}

We close this section with showing how our algorithm can provide linear size kernels for {\sc Max-$c$-Lin-2 Above Average} and a kernel for {\sc Max-$c$-CSP Above Average}.

\begin{corollary}\label{maxclin2ker}
For every $c\geq 3$, the problem {\sc Max-$c$-Lin-2 Above Average} can be reduced to a problem kernel with at most $(c(c+1)/2)k$ variables in polynomial time.
\end{corollary}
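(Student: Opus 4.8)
The plan is to read off the kernel directly from Theorem~\ref{maxclin2yes}, which already isolates $(c(c+1)/2)k$ as the exact variable count separating trivial yes-instances from small ones. First I would apply the preprocessing described just before the proof of Theorem~\ref{maxclin2}: repeatedly find a degenerate pair $e \equiv e' + 1 \pmod 2$, delete the lighter equation $e'$, and subtract $w(e')$ from $w(e)$. As observed there, this operation shifts the weight of every assignment and the quantity $W/2$ by the same amount $w(e')$, so the transformed instance is a yes-instance if and only if the original is, it keeps the parameter $k$ unchanged, and it is clearly computable in polynomial time. After exhausting degenerate pairs the system $F$ is non-degenerate, which is precisely the hypothesis under which Theorem~\ref{maxclin2yes} applies.

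Next I would branch on the number of variables $|var(F)|$. If $|var(F)| \geq (c(c+1)/2)k$, then Theorem~\ref{maxclin2yes} guarantees that $I$ is a yes-instance (indeed an assignment of weight at least $W/2+k/2$ can be produced in polynomial time), so the kernelization simply outputs a fixed constant-size yes-instance of {\sc Max-$c$-Lin-2 Above Average}, for example the single equation $x_1 = 1$ of weight $2$ with parameter $k=1$; this has at most $(c(c+1)/2)k$ variables whenever $k \geq 1$, and the case $k=0$ is trivial since every instance is then a yes-instance. Otherwise $|var(F)| < (c(c+1)/2)k$, and the reduced instance already meets the variable bound, so I would output it unchanged. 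In both branches the transformation preserves membership and runs in polynomial time, and the output has at most $(c(c+1)/2)k$ variables.

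The variable bound is thus immediate; the one point that I expect to require a little care is that a genuine kernel must have its \emph{total} size bounded by a function of $k$, not merely few variables. I would handle this on the small branch with a final merging step: any two equations sharing the same variable set and the same constant term are combined into one by summing their weights, which preserves every assignment's weight. On $N < (c(c+1)/2)k$ variables there are at most $2\sum_{i \le c}\binom{N}{i} = k^{O(c)}$ distinct equations of arity at most $c$, so after merging the number of equations, and hence the combinatorial size of the instance, is bounded purely in terms of $k$. This confirms that the output is a problem kernel of the same problem with at most $(c(c+1)/2)k$ variables, completing the proof.
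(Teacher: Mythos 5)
Your proposal is correct and follows essentially the same route as the paper: both rest on the dichotomy behind Theorem~\ref{maxclin2yes} (equivalently, Case (1) versus Case (2) in the proof of Theorem~\ref{maxclin2}) --- either the reduced instance has fewer than $(c(c+1)/2)k$ variables and is itself the kernel, or an assignment of weight at least $W/2+k/2$ is found in polynomial time and one outputs a trivial constant-size yes-instance. Your explicit handling of the degeneracy preprocessing, the $k=0$ case, and the merging step that bounds the number of equations are harmless refinements of details the paper leaves implicit.
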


\begin{proof} Consider executing the algorithm of Theorem~\ref{maxclin2}, up to the point before it performs an exhaustive search of assignments. At this point, the algorithm has taken only polynomial time. If there is a $S_j$ with weight at least $k$, the algorithm outputs an assignment with weight at least $W/2+k/2$ in polynomial time. Otherwise, for all $j=c,c-1,\ldots,1$,  $S_j$ has weight less than $k$. It follows (from Case {\bf (1)} in the proof of Theorem~\ref{maxclin2}) that the total number of variables in the instance is at most $(c(c+1)/2)k$.\end{proof}

Note that the size of the kernel in Corollary~\ref{maxclin2ker} matches the prior work for $c=2$~\cite{alon2010}.

\begin{corollary}\label{maxccspker}
For every $c\geq 3$, the problem {\sc Max-$c$-CSP Above Average} can be reduced to a problem kernel with at most $(c(c+1)/2)k$ variables in polynomial time.
\end{corollary}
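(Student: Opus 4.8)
The goal of Corollary~\ref{maxccspker} is to establish a problem kernel with at most $(c(c+1)/2)k$ variables for {\sc Max-$c$-CSP Above Average}, mirroring the kernel for {\sc Max-$c$-Lin-2 Above Average} obtained in Corollary~\ref{maxclin2ker}. The plan is to transport the linear-variable bound from the {\sc Lin-2} setting to the general CSP setting using the variable-preserving reduction of Lemma~\ref{csp2lin}. The crucial feature to exploit is that this reduction maps an instance with $n$ variables and parameter $k$ to a {\sc Max-$c$-Lin-2} instance with the \emph{same} $n$ variables and the \emph{same} parameter $k$ (only the number of equations grows, by a factor of $O(2^c)$). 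Since the kernel bound in Corollary~\ref{maxclin2ker} is stated purely in terms of the number of variables, this variable-preservation is exactly what allows the bound to carry over.

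First I would apply the reduction $R$ of Lemma~\ref{csp2lin} to the given instance $(V,\mc{C},k)$ of {\sc Max-$c$-CSP Above Average}, obtaining in polynomial time an equivalent instance of {\sc Max-$c$-Lin-2 Above Average} on the same variable set $V$ and the same parameter $k$. Next I would run the algorithm of Theorem~\ref{maxclin2} on this {\sc Lin-2} instance up to the branching point, exactly as in the proof of Corollary~\ref{maxclin2ker}: this takes only polynomial time and either produces a collection $S_j$ of weight at least $k$, or certifies that $w(S_j) < k$ for all $j = 1,\ldots,c$. In the first case, Theorem~\ref{maxclin2yes} (equivalently Claim~\ref{maxclin2clm1}) yields a satisfying assignment of weight at least $W/2 + k/2$ in polynomial time, so the {\sc Lin-2} instance — and hence the original CSP instance by equivalence of $R$ — is a trivially decided yes-instance; we may output a constant-size canonical yes-instance as the kernel. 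In the second case, Case~\textbf{(1)} of the proof of Theorem~\ref{maxclin2} shows $|var(F)| < (c(c+1)/2)k$, and since $R$ preserves the variable set, the original instance has at most $(c(c+1)/2)k$ variables, giving the desired kernel directly.

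I do not anticipate a serious technical obstacle, since every ingredient is already in place; the proof is essentially a matter of composition. The one point requiring care is verifying that Lemma~\ref{csp2lin} genuinely preserves the variable set (so the bound on $|var(F)|$ translates back to a bound on $|V|$ for the CSP), rather than merely the parameter — this is precisely the property asserted in the lemma statement, so it is available to us. A minor additional point is that a kernel must be an instance of the \emph{same} problem, so in the easy yes-instance branch I would output a trivial fixed yes-instance of {\sc Max-$c$-CSP Above Average} rather than halting, ensuring the output is a valid self-reduction kernel. With these observations the corollary follows immediately from Corollary~\ref{maxclin2ker} and Lemma~\ref{csp2lin}.
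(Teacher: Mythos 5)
Your overall plan --- reduce to \textsc{Max-$c$-Lin-2 Above Average} via Lemma~\ref{csp2lin}, run the algorithm of Theorem~\ref{maxclin2} up to the branching point, and handle the two outcomes separately --- matches the paper's strategy up to the final step, and your treatment of the yes-branch (outputting a canonical constant-size yes-instance) is fine. The gap is in the other branch. You conclude from $|var(F)|<(c(c+1)/2)k$ that the \emph{original} CSP instance has at most $(c(c+1)/2)k$ variables, ``since $R$ preserves the variable set.'' This inference is invalid. Lemma~\ref{csp2lin} introduces no new variables, but it can \emph{lose} variables: the equations of $F$ correspond to monomials of the polynomial $\sum_i w_i\cdot q_i$ whose \emph{aggregated} Fourier coefficients are nonzero, and coefficients coming from different constraints can cancel. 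A variable all of whose monomials cancel appears in no equation of $F$, yet it is still a variable of the original instance. For a concrete example, take a variable $x$ with two unit-weight constraints, one satisfied iff $x$ is true and one satisfied iff $x$ is false: their polynomials sum to a constant, so $x$ vanishes from $F$. Padding an instance with many such pairs makes $|V|$ arbitrarily large while $|var(F)|$ stays small, so in this branch the original instance is not a valid kernel. (The paper itself exhibits exactly this phenomenon in Section~\ref{sec:per}: for the directed cycle $\overrightarrow{C_n}$, \emph{every} variable disappears in the associated linear system, which is why the permutation case needs the extra ``every variable is represented'' machinery.)

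The paper closes this gap differently: after kernelizing in the Lin-2 world, it does not return the original instance but converts the small Lin-2 instance \emph{back} into a \textsc{Max-$c$-CSP Above Average} instance, using the procedure ${\cal P}$ of Alon et al.~\cite{alon2010}, which maps each equation to $2^{c-1}$ clauses. This yields a same-problem instance with at most $(c(c+1)/2)k$ variables, $O(2^{c-1}\cdot (c(c+1)k/2)^c)$ constraints, and parameter $2^{c-1}k$; the parameter inflation is harmless for a kernel, since only $k'\leq f(k)$ is required. Your proof would be repaired the same way: in the second branch, replace ``output the original instance'' with ``apply ${\cal P}$ to the kernelized Lin-2 instance and output the result.''
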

\begin{proof}
In the proof of Theorem 1 in~\cite{alon2010}, a procedure ${\cal P}$ is given that reduces any instance of {\sc Max-$c$-Lin-2 Above Average} with total sum of weights $W$ and parameter $k$ into an instance of {\sc Max-$c$-CSP Above Average}\footnote{In \cite{alon2010}, the transformed instance is in fact a {\sc Max-$c$-SAT} instance.} with (a multiset of) $2^{c-1}W$ constraints and parameter $2^{c-1}k$. More precisely, the procedure ${\cal P}$ considers an instance of {\sc Max-$c$-Lin-2 Above Average} in which each equation has weight 1 and the multiplicity of an equation may be larger than one. ${\cal P}$ maps an equation into a set of $2^{c-1}$ clauses.

Given a {\sc Max-$c$-CSP Above Average} instance on $n$ variables and $m$ constraints, we first perform the transformation given by Lemma~\ref{csp2lin} and obtain a {\sc Max-$c$-Lin-2 Above Average} instance, with $O(2^c m)$ equations and $n$ variables. By applying the kernelization of Theorem~\ref{maxclin2} we obtain an equivalent instance with at most $(c(c+1)/2)k$ variables and no more than $(c(c+1)k/2)^c$ (weighted) equations. Finally, apply procedure ${\cal P}$ to reduce the problem {\em back} into a {\sc Max-$c$-CSP Above Average} instance, having $(c(c+1)/2)k$ variables, $O(2^{c-1}\cdot (c(c+1)k/2)^c)$ constraints and parameter $2^{c-1}k$. \end{proof}

\section{MAX-$c$-Permutation CSP Above Average}\label{sec:per}

In Section \ref{secaver}, we proved that every reduced instance of {\sc Max-$c$-Lin-2} either has an assignment satisfying $W/2 + k$ weight of equations, or has at most $(c(c+1)/2)k$ variables. In this section, we describe how to apply this result to the problems {\sc Max-$c$-Permutation CSP} for $c=2,3$ and obtain a bikernel with $O(k)$ variables. By Proposition \ref{focuslin}, it suffices for us to focus on the problem {\sc 3-Linear Ordering} instead of considering general {\sc Max-$c$-Permutation CSP}. The outline of our proof is as follows. We first introduce new reduction rule for simplifying 3-linear ordering instances. Then we reduce a given instance to another intermediate problem called {\sc $t$-Ordering}. Finally we show how to reduce {\sc $t$-Ordering} to {\sc Max-$3t$-Lin-2}, so we can apply our results from the previous section.

\begin{comment}
To this end, we establish a new link from {\sc Max-$c$-Lin-2} to {\sc $c$-OCSP} in this section. We introduce some reduction rules for {\sc $c$-OCSP} and use the idea of {\em bucket ordering} to transform the problem into boolean CSP. To approximate a linear ordering using a bucket ordering is a natural approach for permutation constraint constraint problems, for example \cite{GutinKMY10, gutinvan, charikar09}. The key ingredient of our work is to reduce the {\sc $3$-Linear Ordering} instance via the reduction rules so that {\em every variable in the reduced {\sc $3$-Linear Ordering} instance has a representative in the associated {\sc $t$-Ordering} instance.} Based on this observation, we can safely handle {\sc $t$-Ordering} instance as a boolean CSP instance, in particular {\sc Max-$c$-Lin-2 Above Average} instance. In particular, the result from the previous section implies that the reduced {\sc $3$-Linear Ordering} instance is a kernel with $O(k)$ variables.
\end{comment}

Recall that in {\sc 3-Linear Ordering}, we are given a collection $\mc{C}$ of constraints $e_i$ over a variable set $V$ with weights $w_i$, sometimes denoted as $w_i(e_i)$, and an integer $k\geq 0$. A constraint $e_i$ is an ordered tuple of size at most three, i.e. $e_i=(v_1, v_2 , v_3)$ or $e_i=(v_1,v_2)$. Constraint $e_i$ is satisfied by a linear ordering $\phi$ if $\phi(v_1) < \phi(v_2) < \phi(v_3)$ (or $\phi(v_1) < \phi(v_2)$, respectively). Let $w(\phi, \mc{C})$ be the total sum of weight of constraints satisfied by $\phi$. Our task is to decide if there is a linear ordering $\phi:V\rightarrow [n]$ of weight $w(\phi,\mc{C})$ at least $\rho\cdot W+k$, where $W$ is the total weight of constraints in $\mc{C}$ and $\rho$ is the expected fraction of weighted constraints satisfied by a uniform random assignment.

For an ordering constraint $e=(v_1,v_2,v_3)\in \mc{C}$, let $var(e) = \{v_1,v_2,v_3\}$ and let $e(i)$ denote the $i$-th cordinate of $e$. Let $\Delta(u,v,w)$ be the set of all possible six constraints on variables $u,v,w$. Let $v_1,v_2,v_3$ be a triple of variables and suppose $\Delta(v_1,v_2,v_3)\cap \mc{C}\neq \emptyset$. Then $\mc{C}(u,v,w)$ is defined as $\Delta(v_1,v_2,v_3)\cap \mc{C}$. Given a constraint $e=(v_1,v_2,v_3)$, we simplify $\Delta(v_1,v_2,v_3)$ and $\mc{C}(v_1,v_2,v_3)$ as $\Delta(e)$ and $\mc{C}(e)$ respectively.

We say that two distinct constraints $e,e'\in \Delta(v_1,v_2,v_3)$ are a {\em reverse pair} if the ordering of $e'=(e(3),e(2),e(1))$ or equivalently, $e(2)=e'(2)$. We say that the reverse pair $e,e' \in \Delta(v_1,v_2,v_3)$ is {\em absent} in the instance $I=(V,\mc{C},k)$ if none of $e$ and $e'$ appear in $\mc{C}$.

We have the following reduction rules for {\sc 3-Linear Ordering} which were known in the literature and their correctness is obvious.

\noindent {\bf Redundancy Rule \:} Remove a variable $v$ from $V$ if it does not appear in any constraint. Remove a constraint $e$ from $\mc{C}$ if its weight is zero.

\noindent {\bf Merging Rule:} If $e_1$ and $e_2$ are identical, then replace them by a single constraint $e_1$ of weight $w_1+w_2$.

\noindent {\bf Cancellation Rule:} If there are two constraints $e_1$, $e_2$ with $|e_1|=|e_2|=2$ and $e_2=(e_1(2),e_1(1))$, let $w_{min}=\min\{w(e_1), w(e_2)\}$ and replace the weights by $w'(e_1):=w(e_1)-w_{min}$ and $w'(e_2):=w(e_2)-w_{min}$.

Here we introduce two new reduction rules.

\noindent {\bf Edge Replacement Rule:} If $e_1,e_2,e_3$ are three constraints in $\mc{C}$ with $var(e_1)=var(e_2)=var(e_3)$ and such that $e_2=(e_1(2),e_1(1),e_1(3))$ and $e_3=(e_1(1),e_1(3),e_1(2))$, then:
\begin{itemize}
\item replace the weight of a constraint by $w'(e_i):=w(e_i)-w_{min}$ for each $i=1,2,3$, where $w_{min}=\min\{w(e_1), w(e_2), w(e_3)\}$.
\item add the binary ordering constraint $(e_1(1),e_1(3))$ of weight $w_{min}$.
\end{itemize}

\noindent {\bf Cycle Replacement Rule:} If $e_1,e_2,e_3$ are three constraints in $\mc{C}$ with $var(e_1)=var(e_2)=var(e_3)$ and such that $e_2=(e_1(2),e_1(3),e_1(1))$ and $e_3=(e_1(3),e_1(1),e_1(2))$, then:
\begin{itemize}
\item replace the weight of a constraint by $w'(e_i):=w(e_i)-w_{min}$ for each $i=1,2,3$, where $w_{min}=\min\{w(e_1), w(e_2), w(e_3)\}$.
\item add the three binary ordering constraints $(e_1(1),e_1(2))$, $(e_1(2),e_1(3))$ and $(e_1(3),e_1(1))$, each of weight $w_{min}$.
\end{itemize}

\begin{lemma}\label{sound}
The Edge Replacement and Cycle Replacement Rules are sound.
\end{lemma}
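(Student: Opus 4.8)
The plan is to prove soundness via the stronger and cleaner fact that each rule changes the weight $w(\phi,\mc{C})$ of \emph{every} linear ordering $\phi$ by one and the same additive constant $\delta$, independent of $\phi$. Once this is in hand, soundness is immediate: if every ordering's weight shifts by $\delta$, then both $\max_\phi w(\phi,\mc{C})$ and the expectation $\Ex_\phi[w(\phi,\mc{C})]=\rho W$ taken over a uniformly random permutation shift by exactly $\delta$, so the above-average excess $\max_\phi w(\phi,\mc{C})-\rho W$ is invariant. Hence the transformed instance has an ordering of weight at least $\rho' W'+k$ if and only if the original does, i.e.\ the yes/no answer is preserved.

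First I would note that both rules touch only the constraints whose variable set lies in $\{x,y,z\}:=var(e_1)$: the three modified triples together with the newly added binary constraints on pairs from $\{x,y,z\}$. Since $w(\phi,\mc{C})$ is additive over constraints, the contribution of every other constraint is unaffected, so $\delta(\phi)$ depends only on the relative order that $\phi$ induces on $x,y,z$. This reduces the whole verification to a finite check over the $3!=6$ relative orderings of $x,y,z$. Throughout, write $\mb{1}[\,e\,]$ for the $0/1$ indicator that a constraint $e$ is satisfied by $\phi$.

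For the \textbf{Edge Replacement Rule}, with $e_1=(x,y,z)$, $e_2=(y,x,z)$, $e_3=(x,z,y)$, the key identity is
\[
\mb{1}[\,e_1\,]+\mb{1}[\,e_2\,]+\mb{1}[\,e_3\,]=\mb{1}[\,\phi(x)<\phi(z)\,],
\]
valid for every $\phi$: the orderings satisfying $e_1,e_2,e_3$ are exactly $xyz,\,yxz,\,xzy$, which are precisely the three of the six orderings with $\phi(x)<\phi(z)$, and they are mutually exclusive. Consequently, subtracting $w_{min}$ from each of $e_1,e_2,e_3$ and adding a weight-$w_{min}$ copy of the binary constraint $(x,z)$ leaves $w(\phi,\mc{C})$ unchanged, so $\delta=0$. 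For the \textbf{Cycle Replacement Rule}, with $e_1=(x,y,z)$, $e_2=(y,z,x)$, $e_3=(z,x,y)$ (the three cyclic rotations) and the added binaries $(x,y),(y,z),(z,x)$, the analogous identity is
\[
\mb{1}[\,(x,y)\,]+\mb{1}[\,(y,z)\,]+\mb{1}[\,(z,x)\,]=\mb{1}[\,e_1\,]+\mb{1}[\,e_2\,]+\mb{1}[\,e_3\,]+1,
\]
again valid for every $\phi$: no linear order makes all three cyclic binaries $x<y,\,y<z,\,z<x$ true or all three false, so their indicator sum is always $1$ or $2$, and the six-case check shows it exceeds the rotation-indicator sum by exactly $1$. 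Hence subtracting $w_{min}$ from each of $e_1,e_2,e_3$ and adding weight-$w_{min}$ copies of the three binaries raises $w(\phi,\mc{C})$ by exactly $w_{min}$, so $\delta=w_{min}$. In both cases $\delta$ is constant, which by the first paragraph yields soundness.

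I expect the only real content to be the two indicator identities, each a routine enumeration of the six relative orderings of $\{x,y,z\}$, so there is no genuine obstacle beyond the bookkeeping. The one point I would state carefully is that ``adding'' a binary constraint is harmless even if a copy already occurs in $\mc{C}$, since weights combine additively and a later application of the Merging Rule consolidates duplicates without altering any $w(\phi,\mc{C})$.
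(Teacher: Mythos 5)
Your proof is correct and follows essentially the same route as the paper's: both arguments show that the Edge Replacement Rule leaves $w(\phi,\mc{C})$ unchanged for every ordering $\phi$, that the Cycle Replacement Rule shifts it by exactly $w_{min}$ for every $\phi$, and that the average shifts identically, so the above-average excess is preserved. Your version merely makes explicit (via the two indicator identities and the six-case enumeration) what the paper states as an observation.
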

\begin{proof}
Let $(V,\mc{C},k)$ be an instance of {\sc 3-Linear Ordering} and $(V,\mc{C}',k)$ be an instance obtained by applying one of the reduction rules. To see that Edge Replacement Rule is sound, we observe that $w(\phi, \mc{C})=w(\phi, \mc{C}')$ and the claim follows. For the Cycle Replacement Rule, $w(\phi, \mc{C})=w(\phi, \mc{C}')-w_{min}$ as any ordering $\phi$ satisfies at least one of the three binary ordering constraints added and $\phi$ satisfies two of them in $\mc{C}'$ if and only if it satisfies one of three cyclic order constraints in $\mc{C}$. Therefore, there exists a $\phi$ such that $w(\phi,\mc{C})\geq \rho\cdot W +k$ if and only if $w(\phi, \mc{C}')\geq (\rho\cdot W + w_{min}) + k$. It remains to observe that the average weight of $w(\phi,\mc{C})$ increases by $w_{min}$ by the reduction, which completes the proof.
\end{proof}

Note that we can apply all the above reduction rules exhaustively in polynomial time.  We say that an instance $I=(V,\mc{C},k)$ is {\em irreducible} if none of the above five reduction rules can be applied. We are ready to state the main result of this section.

\begin{theorem}\label{linker3per}
Let $I=(V,\mc{C},k)$ be an irreducible instance of {\sc 3-Linear Ordering}. If $I$ is a no-instance (i.e., less than $\rho W + k$ constraints in $I$ can be simultaneously satisfied), then the number of variables in $I$ is less than $15k$ variables.
\end{theorem}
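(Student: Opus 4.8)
The plan is to follow the roadmap sketched just before the statement: massage the irreducible instance into an equivalent \emph{$3$-Ordering} (bucket ordering) instance, recode that as a \textsc{Max-$9$-Lin-2 Above Average} instance, and then invoke Theorem~\ref{maxclin2yes}. The arithmetic I am aiming for is exactly what forces the bound $15k$. With $t=3$ buckets, each variable $v$ is represented by three Boolean indicator variables, one per bucket, so an irreducible instance on $|V|$ variables yields a \textsc{Max-$9$-Lin-2} instance of arity $3t=9$ with exactly $3|V|$ Boolean variables. Theorem~\ref{maxclin2yes} applied with $c=9$ states that any such instance with at least $\frac{9\cdot 10}{2}k = 45k$ variables is a yes-instance; hence a no-instance must satisfy $3|V| < 45k$, i.e. $|V| < 15k$, provided the reduction preserves both the yes/no status and the parameter $k$ exactly.

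Concretely, I would proceed in three steps. First, I would use irreducibility to pin down the structure of $\mc{C}$ restricted to each triple $\mc{C}(e)$: once the Edge Replacement and Cycle Replacement rules are saturated (sound by Lemma~\ref{sound}), together with the Redundancy, Merging, and Cancellation rules, the $3$-ary constraints on any fixed triple can no longer contain the symmetric ``edge'' and ``cycle'' patterns, so the excess of $w(\phi,\mc{C})$ over its average $\rho W$ is carried by a controlled residual pattern on each triple. Second, I would define the $3$-Ordering problem — assign each variable to one of three linearly ordered buckets, scoring each constraint by the probability that a uniformly random refinement inside the buckets satisfies it — and translate the irreducible instance to a $3$-Ordering instance preserving the above-average gap $k$. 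The delicate requirement here is the \emph{representative property}: the reduction rules must guarantee that \emph{every} surviving variable of $V$ still occurs in the $3$-Ordering instance, so that the Boolean recoding genuinely has $3|V|$ variables and no variable silently disappears. Third, I would recode the $3$-Ordering instance over $\mathbb{F}_2$: the bucket membership of each variable lives in its three indicator variables, and, using the canonical form from the first step, each constraint's contribution to the above-average objective becomes a parity over at most $9$ indicator variables, yielding a genuine \textsc{Max-$9$-Lin-2 Above Average} instance with threshold $W'/2 + k/2$ and the same parameter $k$.

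Finally, I would verify the single implication the argument actually needs: a Lin-2 assignment of weight $W'/2+k/2$ must induce a linear ordering of weight $\rho W + k$ (refine each bucket by a best, or random, internal order), so that a no-instance of \textsc{3-Linear Ordering} maps to a no-instance of \textsc{Max-$9$-Lin-2}; the contrapositive of Theorem~\ref{maxclin2yes} then forces $3|V| < 45k$, i.e. $|V| < 15k$. The main obstacle, and where I expect essentially all of the difficulty to lie, is the second and third steps taken jointly: engineering the bucketing so that it simultaneously (i) preserves the above-average gap $k$ \emph{exactly}, since any constant-factor loss would destroy the target constant $15$; (ii) keeps every variable present (the representative property, which is precisely what the two new reduction rules are designed to buy us); and (iii) makes the ordering objective linearize over $\mathbb{F}_2$ so that Theorem~\ref{maxclin2yes} applies. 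Confirming that the saturation of the Edge Replacement and Cycle Replacement rules is exactly what makes all three hold for $t=3$ is the crux of the proof.
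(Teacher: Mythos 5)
Your high-level plan (bucket the variables, Fourier-expand the bucketed payoff, reduce to \textsc{Max-Lin-2 Above Average}, invoke Theorem~\ref{maxclin2yes}, and use irreducibility to keep variables ``represented'') is indeed the paper's roadmap, but the specific encoding you chose breaks it at the two load-bearing points. You encode $3$ buckets by three \emph{one-hot indicator} variables per ordering variable. The reduction to \textsc{Max-$c$-Lin-2 Above Average} needs two properties that a one-hot encoding destroys: (i) a uniformly random $\mathbb{F}_2$-assignment to the Boolean variables must correspond to a uniformly random bucket ordering, so that the two ``averages'' coincide (the analogue of the paper's observation that $\rho_t=\rho$); and (ii) \emph{every} $\mathbb{F}_2$-assignment --- in particular the one produced by Theorem~\ref{maxclin2yes} --- must decode to a bucket ordering, so that a Lin-2 assignment of weight $W'/2+k/2$ induces an ordering of weight $\rho W+k$. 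Under one-hot encoding most assignments (all indicators zero, two indicators set, etc.) encode no bucket at all, so the implication ``Lin-2 yes $\Rightarrow$ ordering yes'' --- exactly the implication your contrapositive needs --- fails, and the random-assignment average is computed over mostly meaningless codewords. The paper avoids both problems by using a \emph{binary} encoding: $t=2$ bits per variable addressing $2^t=4$ buckets, so the Boolean cube is in bijection with bucket orderings.

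Second, your arithmetic $3|V|\le |var(F)|<45k$ silently assumes that \emph{all} $3|V|$ indicator variables occur in monomials with nonzero coefficients in the final Lin-2 instance. That is strictly stronger than the representative property you correctly identify as the crux, and it is not what the reduction rules deliver: the paper's Lemmas~\ref{represented2}, \ref{represented3} and \ref{represented} only guarantee that for each $v\in V$ \emph{at least one} of its Boolean variables survives in $F(\mc{C})$, whence $|V|\le|var(F)|$. If only one indicator per variable is guaranteed, your bound degrades to $|V|<45k$. The constant $15$ in the paper does not come from counting each ordering variable several times; it comes from the bucketed payoffs having low degree (the top-degree coefficients cancel, so the instance is effectively of arity $5$, and Theorem~\ref{maxclin2yes} bounds the number of \emph{Boolean} variables by $(5\cdot 6/2)k=15k$). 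Finally, the actual technical content of the proof --- the explicit Fourier-coefficient computations showing that, thanks to the absent reverse pairs guaranteed by irreducibility (Lemma~\ref{absentpair}), some monomial touching each variable survives uncancelled in $g(x)$ --- is precisely the part you defer to ``the crux''; without it the representative property, and hence the theorem, remains unproved.
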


The rest of this section will be devoted to proving Theorem \ref{linker3per}. We start with an observation about the irreducible instances.

\begin{lemma}\label{absentpair}
Let $I=(V,\mc{C},k)$ be an irreducible instance of {\sc 3-Linear Ordering}. Then for every triplet $v_1,v_2,v_3$ with $\mc{C}(v_1,v_2,v_3)\neq \emptyset$, there is an absent pair $e,e'\in \Delta(v_1,v_2,v_3)$ in $I$.
\end{lemma}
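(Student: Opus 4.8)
The plan is to argue by contraposition: I will show that if some active triple $v_1,v_2,v_3$ (one with $\mc{C}(v_1,v_2,v_3)\neq\emptyset$) has \emph{no} absent reverse pair, then one of the two new replacement rules applies, contradicting irreducibility. First I would record the structure of $\Delta(v_1,v_2,v_3)$: its six ordering constraints split into exactly three reverse pairs, one for each choice of ``middle'' variable, since $e$ and $e'=(e(3),e(2),e(1))$ share their second coordinate. Saying that no reverse pair is absent means each of the three pairs contains at least one constraint that is present in $\mc{C}$. I would then pick one present constraint $g_i$ from each pair, obtaining a \emph{transversal} $\{g_1,g_2,g_3\}\subseteq\mc{C}$, all three on the same variable set $\{v_1,v_2,v_3\}$.

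The heart of the argument is the following finite claim: every transversal of the three reverse pairs coincides, as a set, with the constraint triple appearing in either the Edge Replacement Rule or the Cycle Replacement Rule. There are exactly $2^3=8$ transversals, and I would identify each with a tournament on $\{v_1,v_2,v_3\}$: after fixing the middle element of a reverse pair, the choice between its two members records the relative order of the two endpoints, so a transversal prescribes a relative order for each of the three variable-pairs. The six \emph{transitive} tournaments then correspond precisely to the six orderings of the Edge Replacement Rule (taken over all choices of base constraint $e_1$), while the two \emph{cyclic} tournaments correspond to the two orderings of the Cycle Replacement Rule. I would verify this by direct enumeration; for instance the transitive tournament $v_1<v_2<v_3$ yields the transversal $\{(v_1,v_2,v_3),(v_1,v_3,v_2),(v_2,v_1,v_3)\}$, which is an Edge pattern, whereas a cyclic tournament yields a triple of the form $\{(v_1,v_2,v_3),(v_2,v_3,v_1),(v_3,v_1,v_2)\}$, which is a Cycle pattern.

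Given the claim, the transversal $\{g_1,g_2,g_3\}$ matches one of the two replacement rules. Since the instance is irreducible it is in particular reduced under the Redundancy Rule, so every present constraint has positive weight; hence $w_{min}=\min\{w(g_1),w(g_2),w(g_3)\}\geq 1>0$ and the matched rule genuinely changes the instance (it annihilates the minimum-weight constraint and introduces new binary constraints of positive weight). This contradicts irreducibility, so an absent reverse pair must exist, which proves the lemma. I expect the main obstacle to be the combinatorial bookkeeping behind the key claim: one must check that \emph{every} transversal — not merely a convenient representative — realizes an Edge or Cycle configuration, and that the labeling of the three constraints can always be matched to the template $e_1,e_2,e_3$ demanded by the rules. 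The tournament viewpoint makes this transparent, reducing it to the standard dichotomy between transitive and cyclic tournaments on three vertices.
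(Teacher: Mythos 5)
Your proof is correct and takes essentially the same route as the paper's: argue by contradiction, select one present constraint from each of the three reverse pairs (equivalently, one per middle element), and show the resulting triple must form an Edge Replacement or Cycle Replacement configuration. Your transitive-versus-cyclic tournament dichotomy is precisely the paper's case split on whether the three first coordinates $\{e_1(1),e_2(1),e_3(1)\}$ are all distinct (cyclic, Cycle Rule) or not (transitive, Edge Rule); you simply carry out explicitly the enumeration the paper leaves as ``one can verify.''
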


\begin{proof}
Suppose that $I$ has no absent pair in $\Delta(v_1,v_2,v_3)$. Then for each $v_i$, there exists a constraint $e_i\in \mc{C}(v_1,v_2,v_3)$ with $e_i(2)=v_i$. Consider the set $\{e_1(1),e_2(1),e_3(1)\}$ consisting of the first variables of $e_1,e_2$ and $e_3$. There are two possibilities: (a) $\{e_1(1),e_2(1),e_3(1)\}=\{v_1,v_2,v_3\}$ and (b) $\{e_1(1),e_2(1),e_3(1)\}=\{v_1,v_2,v_3\}\setminus \{v_i\}$ for some $i$. One can verify that the Cycle Replacement Rule applies in the case (a) and the Edge Replacement Rule applies in the case of (b).
\end{proof}

We now reduce 3-linear ordering to another problem that maps variables to bit vectors. This representation will be useful in the final reduction to {\sc Max-$c$-Lin-2}. A $t$-ordering is a mapping $\phi_t: V\rightarrow \{1,-1\}^t$. The vectors in $\{1,-1\}^{t}$ are totally ordered by lexicographic order, where $(+1) < (-1)$.\footnote{The reason for this ``funny'' ordering is that we want {\em true} to correspond to $-1$, and {\em false} to correspond to $1$, so that the parity function can be represented by multiplication, and the empty set of variables will correspond to {\em false} (i.e., even parity).} For example, we have $(+1,+1,-1)<(+1,-1,+1)$. As the vectors $\{1,-1\}^t$ are totally ordered, we interpret them as {\em positions} in $2^t$ buckets. Any $t$-ordering $\phi_t$ can be extended into a linear ordering $\phi : V \rightarrow \N$ which preserves the {\em bucket ordering} imposed by $\phi_t$, so that $\phi(v_i)<\phi(v_j)$ whenever $\phi_t(v_i)<\phi_t(v_j)$. We say $\phi$ is an extension of $\phi_t$ if it preserve the bucket ordering imposed by $\phi_t$.

The problem {\sc $t$-Ordering} is to find (or determine that there is no) $t$-ordering $\phi_t: V\rightarrow \{1,-1\}^t$ with weight $w_t(\phi_t,\mc{C})$ at least $\rho_t\cdot W + k$, where $W$ is the total weight of all constraints and $\rho_t$ is the expected fraction of weighted constraints satisfied by a uniform random assignment, i.e., $\rho_t\cdot W := \Ex[w_t(\phi_t,\mc{C})]$.

We can view an instance $(V,\mc{C},k)$ of {\sc 3-Linear Ordering} as an instance of {\sc $t$-Ordering} by introducing a new payoff function. For each constraint $e=(v_1, v_2, v_3)$, a payoff function $g_{e}:(\{1,-1\}^{t})^3 \rightarrow [0,1]$ maps $\tilde{x}=(\tilde{x}_1,\cdots , \tilde{x}_{3t})\in (\{1,-1\}^{t})^3$, a $t$-ordering $\phi_t$ restricted to the tuple $e$, to the probability that its random extension satisfies $e$. The weight of $e$ remains the same.  Note that $g_{e}(\tilde{x})$ is nonzero if and only if the three vectors $(\tilde{x}_1,\ldots, \tilde{x}_t), (\tilde{x}_{t+1},\ldots,\tilde{x}_{2t})$ and $(\tilde{x}_{2t+1},\ldots ,\tilde{x}_{3t})$ are lexicographically non-decreasing. By $w_t(\phi_t,\mc{C})$, we refer to the value $\sum_{e\in \mc{C}}w_e\cdot g_e(\phi_t)$.

As the function $g_e$ is defined on $x\in \{1,-1\}^{3t}$, it can be uniquely represented as a multilinear polynomial $$g_e(x)=\sum_{S\subseteq [3t]} \hat{g}_e(S)\cdot \chi_S(x).$$  Consequently, the weight of a $t$-ordering $\phi_t$ is expressed by the polynomial $g(x)=\sum_{e\in \mc{C}} w_e\cdot g_e(x)$ on $x\in \{1,-1\}^{tn}$. In other words, for every $t$-ordering $\phi_t$ we have $g(\tilde{x})=w_t(\phi_t,\mc{C})$. Here $(\tilde{x}_{(i-1)t+1},\ldots, \tilde{x}_{it}):=\phi_t(v_i)$ for each $i\in [n]$. The polynomial $g(x)$ can be rewritten as follows:

\begin{eqnarray}\label{eq:tord}
g(x) &=& \sum_{e\in \mc{C}} w_e \cdot g_e(x) = \sum_{e\in \mc{C}} w_e \cdot \sum_{S\subseteq [3t]} \hat{g}_e(S)\cdot \chi_S(x) \nonumber \\
&=& \sum_{S\subseteq [nt], |S|\leq 3t} \left( \sum_{e\in \mc{C}} w_e\hat{g}_e(S) \right)\chi_S(x) \nonumber \\
&=& \sum_{S\subseteq [nt], |S|\leq 3t} \hat{g}(S)\chi_S(x)
\end{eqnarray}

The following observation is folklore.

\begin{observation}
Let $(V,\mc{C},k)$ be an instance of {\sc 3-Linear Ordering}. Then $\rho = \rho_t$. In particular, if every constraint $e\in \mc{C}$ has arity exactly $3$, we have $\rho=\rho_t=\frac{1}{3!}$.
\end{observation}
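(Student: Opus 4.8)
The plan is to prove $\rho = \rho_t$ constraint-by-constraint and then sum, since both $\rho \cdot W$ and $\rho_t \cdot W$ are expectations of a weighted sum of per-constraint payoffs. By linearity of expectation, $\rho \cdot W = \sum_{e \in \mc{C}} w_e \cdot \Pr_\phi[\phi \text{ satisfies } e]$, where $\phi$ is a uniform random linear ordering, while by definition $\rho_t \cdot W = \Ex_{\phi_t}[\sum_{e} w_e \, g_e(\phi_t)] = \sum_{e} w_e \, \Ex_{\phi_t}[g_e(\phi_t)]$, where $\phi_t$ is a uniform random $t$-ordering. Hence it suffices to show, for each individual constraint $e$, that $\Pr_\phi[\phi \text{ satisfies } e] = \Ex_{\phi_t}[g_e(\phi_t)]$.

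First I would unfold the right-hand side. Since $g_e(\phi_t)$ is by definition the probability that a uniformly random extension of $\phi_t$ satisfies $e$, the quantity $\Ex_{\phi_t}[g_e(\phi_t)]$ is exactly the probability that the following two-stage process produces a linear ordering satisfying $e$: draw $\phi_t$ uniformly at random (equivalently, assign each variable to an independent, uniform bucket in $\{1,-1\}^t$), and then break ties within each bucket by a uniformly random order. Call the resulting random linear order $\psi$. The claim then reduces to $\Pr_\psi[\psi \text{ satisfies } e] = \Pr_\phi[\phi \text{ satisfies } e]$.

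The key observation is that whether $e$ is satisfied depends only on the relative order that a linear ordering induces on $var(e)$, so I only need the two processes to agree on this relative order, not on the full ordering (indeed $\psi$ is \emph{not} uniform over all permutations). I would argue that for any fixed set of $c' \le 3$ variables, the relative order induced under $\psi$ is uniform over the $c'!$ possibilities. This follows from exchangeability: the bucket assignments of the variables of $e$ are i.i.d. and the within-bucket tie-breaking is symmetric, so the joint law of their final positions is invariant under relabeling the variables of $e$; since the relabeling group acts transitively on the set of relative orders, and these positions are distinct almost surely, the induced order is uniform. The same uniformity holds trivially for the uniform random permutation $\phi$. Therefore $\Pr_\psi[\psi \text{ satisfies } e] = \Pr_\phi[\phi \text{ satisfies } e]$, and summing the weighted contributions yields $\rho = \rho_t$.

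Finally, for the ``in particular'' clause I would simply specialize: when every constraint has arity exactly $3$, each $e$ is satisfied precisely when the uniform relative order on its three variables is the single correct one out of $3!$, so its satisfaction probability is $1/3!$, whence $\rho = \rho_t = 1/3!$. The only point requiring care — and the main (if modest) obstacle — is the exchangeability step for $\psi$: one must verify that the random extension breaks ties symmetrically and that no genuine ties survive in the final ordering, so that the induced relative order is a bona fide uniform element of $\mathbb{S}_{c'}$.
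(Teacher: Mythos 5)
Your proposal is correct and takes essentially the same route as the paper: both arguments reduce, by linearity, to the per-constraint claim that $\Ex[g_e(\phi_t)]$ equals the probability $1/|e|!$ that a uniform random linear ordering satisfies $e$, and both establish this by symmetry among the $|e|!$ relative orderings of $var(e)$ combined with the fact that exactly one of them holds. The paper packages this as $\sum_{e'\in \Delta(e)} \Ex[g_{e'}(\phi_t)] = 1$ with all terms equal by symmetry, while you package it as exchangeability of the two-stage process (random buckets plus random tie-breaking); these are the same argument in different clothing.
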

\begin{proof}
Let $g_e(x)$ be the payoff function for constraint $e$ in the associated {\sc $t$-Ordering} instance. We have $\sum_{e'\in \Delta(e)} \Ex[g_{e'}(\phi_t)] = 1$ since we satisfy exactly one constraint from $\Delta(e)$ regardless of $\phi_t$. Moreover, we know $\Ex[g_{e'}(\phi_t)]$ takes the same value for all $e' \in \Delta(e)$. For $|\Delta(e)|=|e|!$, it follows that $\Ex[g_e(\phi_t)]=\frac{1}{|e|!}$. Observe $\rho_t\cdot W=\Ex[g(x)]=\sum_{e\in \mc{C}} w_e \Ex[g_{e}(x)] = \sum_{e\in \mc{C}} w_e\frac{1}{|e|!} = \rho\cdot W$. This completes the proof.
\end{proof}

By averaging argument, we observe the following.

\begin{observation}
Let $(V,\mc{C},k)$ be an instance of {\sc 3-Linear Ordering}, for any $t$-ordering $\phi$, we have $w_t(\phi_t,\mc{C}) \leq \max_{\phi} w(\phi,\mc{C})$, where $\phi$ is taken over all extensions of $\phi_t$.
\end{observation}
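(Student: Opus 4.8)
The plan is to recognize that $w_t(\phi_t,\mc{C})$ is exactly the \emph{expected} weight $\Ex_\phi[w(\phi,\mc{C})]$ of a uniformly random linear extension $\phi$ of the $t$-ordering $\phi_t$, and then to invoke the elementary fact that an average never exceeds a maximum. All of the (modest) content lies in reading off this expectation identity from the definition of the payoff functions $g_e$.

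First I would unpack $g_e$. By definition, $g_e(\phi_t)$ is the probability that a random extension of $\phi_t$ satisfies the constraint $e$; since the event ``$\phi$ satisfies $e$'' depends only on the relative order of the three variables of $e$, and a global uniformly random extension breaks ties within each bucket by an independent uniform permutation, the marginal law of that relative order coincides with the local one. Hence $g_e(\phi_t) = \Pr_{\phi}[\phi \text{ satisfies } e]$ for $\phi$ a global uniformly random extension of $\phi_t$. Writing $w(\phi,\mc{C}) = \sum_{e\in\mc{C}} w_e \cdot [\phi \text{ satisfies } e]$ with $[\cdot]$ the $0/1$ indicator, linearity of expectation then gives
$$w_t(\phi_t,\mc{C}) = \sum_{e\in\mc{C}} w_e \cdot g_e(\phi_t) = \sum_{e\in\mc{C}} w_e \cdot \Pr_{\phi}[\phi \text{ satisfies } e] = \Ex_{\phi}\bigl[w(\phi,\mc{C})\bigr],$$
where the expectation ranges over uniformly random extensions $\phi$ of $\phi_t$.

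Finally, since the average of $w(\phi,\mc{C})$ over all extensions of $\phi_t$ cannot exceed its maximum over those same extensions, we conclude $w_t(\phi_t,\mc{C}) = \Ex_{\phi}[w(\phi,\mc{C})] \leq \max_{\phi} w(\phi,\mc{C})$. I do not anticipate any genuine obstacle: the entire argument reduces to identifying $g_e(\phi_t)$ with the per-constraint satisfaction probability over random extensions, after which the claim is a one-line averaging bound.
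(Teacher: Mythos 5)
Your proof is correct and is exactly the argument the paper intends: the paper offers no written proof beyond the phrase ``by averaging argument,'' and your identification of $w_t(\phi_t,\mc{C})$ with $\Ex_{\phi}[w(\phi,\mc{C})]$ over uniformly random extensions, followed by average $\leq$ maximum, is precisely that argument spelled out. The one step you add --- checking that the locally defined probability $g_e(\phi_t)$ agrees with the marginal of a global random extension --- is a worthwhile clarification but not a departure from the paper's route.
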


An easy consequence is:
\begin{observation}\label{yesyes}
Let $(V,\mc{C},k)$ is an instance of {\sc $3$-Linear Ordering}. If it has a $t$-ordering $\phi_t$ with $w_t(\phi_t,\mc{C})\geq \rho\cdot W + k$, then there is a linear ordering $\phi$ with $w(\phi,\mc{C})\geq \rho \cdot W + k$.
\end{observation}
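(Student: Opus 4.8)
The plan is to read the result off directly from the preceding averaging observation, since the statement is essentially a repackaging of it. First I would unwind the definition of $w_t(\phi_t,\mc{C})$. Recall that the payoff function $g_e$ was defined so that $g_e(\phi_t)$ is exactly the probability that a uniformly random extension $\phi$ of $\phi_t$ satisfies the constraint $e$. Hence, by linearity of expectation,
\[
w_t(\phi_t,\mc{C}) = \sum_{e\in\mc{C}} w_e\, g_e(\phi_t) = \Ex_{\phi}\Big[\sum_{e\in\mc{C}} w_e\,\mb{1}[\phi\text{ satisfies }e]\Big] = \Ex_{\phi}\big[w(\phi,\mc{C})\big],
\]
where $\phi$ ranges over uniformly random extensions of $\phi_t$ to a full linear ordering.

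With this identity in hand, the conclusion follows immediately from the preceding observation, which records that the maximum of a quantity over a family is at least its average:
\[
\max_{\phi}\, w(\phi,\mc{C}) \;\geq\; \Ex_{\phi}\big[w(\phi,\mc{C})\big] \;=\; w_t(\phi_t,\mc{C}) \;\geq\; \rho\cdot W + k .
\]
In particular, some extension $\phi$ attains $w(\phi,\mc{C})\geq \rho\cdot W+k$. Since every extension of a $t$-ordering is, by definition, a genuine linear ordering $\phi:V\to\N$ preserving the bucket order (and may be relabeled as a bijection onto $[n]$), this $\phi$ is the desired linear ordering, which completes the argument.

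There is no real obstacle here: the statement is flagged as an \emph{easy consequence}, and the only point requiring any care is the bookkeeping in the first display, namely confirming that the per-constraint satisfaction probabilities $g_e(\phi_t)$ assemble, through linearity of expectation, into the full expected weight $\Ex_{\phi}[w(\phi,\mc{C})]$ of a random extension. Everything beyond that is the trivial ``maximum is at least the average'' step already isolated in the preceding observation.
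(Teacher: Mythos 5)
Your proof is correct and follows the same route the paper intends: the paper states this as ``an easy consequence'' of the preceding averaging observation $w_t(\phi_t,\mc{C})\leq \max_{\phi} w(\phi,\mc{C})$, which is exactly the maximum-versus-average step you invoke. The only difference is that you also spell out why $w_t(\phi_t,\mc{C})=\Ex_{\phi}[w(\phi,\mc{C})]$ over random extensions (the content the paper compresses into ``by averaging argument''), which is a harmless and accurate elaboration.
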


\begin{observation}
For a constraint $e$ of arity $c$, the fourier coefficients of $g_e$ are $\frac{j}{c!2^{ct}}$ with integral $j \in [-c!2^{ct},c!2^{ct}]$. Accordingly, the same condition holds for the fourier coefficients of $g$.
\end{observation}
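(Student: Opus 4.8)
The plan is to first pin down the exact set of values that the payoff function $g_e$ can take, and then translate this into a statement about its Fourier coefficients through the inverse Fourier transform. I would analyze $g_e(\tilde{x})$ directly from its definition as the probability that a random extension of the $t$-ordering $\tilde{x}$ (restricted to $var(e)$) satisfies $e$. Recall $g_e(\tilde{x})$ is nonzero only when the $c$ bucket-vectors assigned to the variables of $e$ are lexicographically non-decreasing. When this holds, the variables sharing a common bucket form consecutive runs in the tuple $e$ (a non-decreasing sequence with two equal entries forces everything between them to be equal), say of sizes $m_1,\ldots,m_r$ with $\sum_i m_i = c$. Variables placed in distinct buckets already have their relative order fixed and correct, whereas within each bucket the random extension orders its run uniformly, realizing the single required internal order with probability $1/m_i!$; these events are independent across buckets. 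Hence $g_e(\tilde{x}) = \prod_{i=1}^r \frac{1}{m_i!} = \frac{1}{c!}\binom{c}{m_1,\ldots,m_r}$, and since the multinomial coefficient is a nonnegative integer, $g_e(\tilde{x})$ is an integer multiple of $1/c!$ lying in $[0,1]$.

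Next I would feed this into the inverse Fourier transform. Writing $\hat{g}_e(S) = 2^{-ct}\sum_{x\in\{1,-1\}^{ct}} g_e(x)\chi_S(x)$ and substituting $g_e(x) = j_x/c!$ with each $j_x\in\mathbb{N}$ and each $\chi_S(x)\in\{1,-1\}$, the inner sum equals $\frac{1}{c!}$ times an integer, so $\hat{g}_e(S) = \frac{j}{c!\,2^{ct}}$ for some integer $j$. The range bound follows from $|\hat{g}_e(S)| = \left| \Ex_x[g_e(x)\chi_S(x)] \right| \leq \Ex_x |g_e(x)| \leq 1$, which forces $|j|\leq c!\,2^{ct}$, i.e.\ $j\in[-c!2^{ct},\,c!2^{ct}]$.

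For the claim about $g = \sum_{e\in \mc{C}} w_e\, g_e$, I would use linearity of the Fourier transform: $\hat{g}(S) = \sum_{e} w_e\, \hat{g}_e(S)$. A constraint $e$ of arity $c'\leq c$ contributes coefficients of the form $\frac{j_e}{c'!\,2^{c't}}$, and since $c'!\,2^{c't}$ divides $c!\,2^{ct}$, each such coefficient is again an integer multiple of $\frac{1}{c!\,2^{ct}}$; multiplying by the integer weight $w_e$ and summing over $e$ preserves this property, so every $\hat{g}(S)$ is an integer multiple of $\frac{1}{c!\,2^{ct}}$ as well.

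The main obstacle is the combinatorial step identifying $g_e(\tilde{x})$ as a product of reciprocal factorials of run sizes — in particular, justifying that equal-bucket variables form consecutive runs and that their correct internal orderings occur independently with probability $1/m_i!$. Once $g_e$ is shown to take values in $\frac{1}{c!}\mathbb{Z}\cap[0,1]$, the rest is a routine consequence of the inverse transform and the integrality of the weights.
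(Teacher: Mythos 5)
Your proof is correct and takes essentially the same approach as the paper: identify the nonzero values of $g_e$ as products of reciprocal factorials of the bucket occupancy sizes (hence integer multiples of $1/c!$ in $[0,1]$), then pass this through the inverse Fourier transform to get coefficients of the form $j/(c!2^{ct})$, and handle $g$ by linearity and integrality of the weights. Your write-up is in fact somewhat more careful than the paper's terse version, which asserts the value $1/(n_1!\cdots n_{2^t}!)$ without the run/independence justification and does not spell out the divisibility point for constraints of arity $c'\leq c$.
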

\begin{proof} Given two vectors $u$ and $v$, denote their concatenation $u \diamond v$ by the vector $w$ where $w_i = u_i$ for all $i=1,\ldots,|u|$, and $w_{|u|+i} = v_i$ for all $j=1,\ldots,|v|$.
It is not difficult to see that $g_e(x)=1/(n_1!\cdot n_2!\cdot \cdots n_{2^t}!)$ if \[x=(+1,+1,\cdots, +1)^{n_1}\diamond (+1,\cdots,+1,-1)^{n_2} \diamond \cdots \diamond (-1,-1,\cdots ,-1)^{n_{2^t}},\] and $g_e(x)$ takes 0 otherwise. Here $n_1+\cdots + n_{2^t}=c$. Hence, $g_e(x)=j/c!$ for some integer $j\in [0,c!]$. The claim follows from the the inverse fourier transform $\hat{g}_e(S)=\Ex_{x\in \{1,-1\}^{ct}} [g_e(x)\chi_S(x)]$ and the fact that the weight of a constraint $e$ is integral.
\end{proof}

Recall that every monomial in the polynomial $g(x)$ has degree at most $3t$. Therefore, the polynomial can be interpreted as an instance of {\sc Max-$3t$-Lin-2 Above Average}. Indeed, associate each monomial on the subset $S \subseteq [tn]$ in $g(x)$ with a linear equation $\sum_{i\in S}x_i=0$ of weight $3!2^{3t}\cdot \hat{g}(S)$ if $\hat{g}(S)$ is positive, and an equation $\sum_{i\in S}x_i=1$ of weight $-3!2^{3t}\cdot \hat{g}(S)$ if $\hat{g}(S)$ is negative. (Here we multiply by the factor of $3!2^{3t}$ to make the weights integral.) Let $F(\mc{C})$ be the instance of {\sc Max-$3t$-Lin-2 Above Average} constructed from an instance $\mc{C}$ of {\sc $t$-Ordering} with arity at most three. The next lemma claims that $F(\mc{C})$ is equivalent to the {\sc $t$-Ordering} instance. We omit the proof, which is essentially the same as Lemma~\ref{csp2lin}.

\begin{lemma}\label{tord2lin}
Let $(V,\mc{C},k)$ be an instance of {\sc $t$-Ordering}. There exists a $t$-ordering $\phi_t$ with $w(\phi_t)\geq \rho\cdot W + k/(c!2^{tc})$ if and only if the corresponding instance $F(\mc{C})$ of {\sc Max-$c$-Lin-2 Above Average} has an assignment of weight at least $W_{F(\mc{C})}/2 +k/2$, where $W_{F(\mc{C})}$ is the sum of weight of equations in $F(\mc{C})$.
\end{lemma}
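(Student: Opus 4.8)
The plan is to establish the equivalence claimed in Lemma~\ref{tord2lin} by tracking how the objective value of a $t$-ordering translates into the weight of an assignment in the constructed {\sc Max-$c$-Lin-2} instance $F(\mc{C})$. First I would recall the explicit construction: each monomial $\hat{g}(S)\chi_S(x)$ of the polynomial $g(x)$ in equation~(\ref{eq:tord}) becomes an equation $\sum_{i\in S}x_i = b_S$ of weight $c!2^{tc}\cdot|\hat{g}(S)|$, where $b_S=0$ if $\hat{g}(S)>0$ and $b_S=1$ if $\hat{g}(S)<0$. The central computation is to relate, for a fixed assignment $x$ corresponding to a $t$-ordering $\phi_t$, the total weight $w_{F(\mc{C})}(x)$ of satisfied equations to the value $g(\tilde{x})=w_t(\phi_t,\mc{C})$. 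The standard identity here is that for a single monomial, $\chi_S(x)\in\{+1,-1\}$ exactly indicates whether $\sum_{i\in S}x_i$ is even or odd, so the equation of weight $c!2^{tc}|\hat{g}(S)|$ is satisfied precisely when $\hat{g}(S)\chi_S(x)>0$, contributing $+c!2^{tc}|\hat{g}(S)|$, and is violated otherwise.

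The key step is then the algebraic bookkeeping. Summing over all $S$, I would show that the satisfied weight equals $\tfrac{1}{2}\bigl(W_{F(\mc{C})} + c!2^{tc}\sum_S \hat{g}(S)\chi_S(x)\bigr) = \tfrac{1}{2}W_{F(\mc{C})} + \tfrac{1}{2}c!2^{tc}\,g(\tilde{x})$, using that the weight $c!2^{tc}|\hat{g}(S)|$ contributes with sign $+1$ if satisfied and the total $W_{F(\mc{C})}=\sum_S c!2^{tc}|\hat{g}(S)|$. This is exactly the discrete analogue of splitting each term into its average $\tfrac12 w_S$ plus a signed deviation $\tfrac12 w_S\cdot(\pm 1)$. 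Plugging in $g(\tilde{x})=w_t(\phi_t,\mc{C})$ gives $w_{F(\mc{C})}(x) = \tfrac{1}{2}W_{F(\mc{C})} + \tfrac{1}{2}c!2^{tc}\,w_t(\phi_t,\mc{C})$. From this single identity both directions of the biconditional follow: the condition $w_t(\phi_t,\mc{C})\geq \rho\cdot W + k/(c!2^{tc})$ is equivalent, after multiplying by $\tfrac12 c!2^{tc}$ and adding $\tfrac12 W_{F(\mc{C})}$, to $w_{F(\mc{C})}(x)\geq \tfrac{1}{2}W_{F(\mc{C})} + \tfrac12 c!2^{tc}\rho W + k/2$. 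It remains only to verify that $\tfrac12 W_{F(\mc{C})} + \tfrac12 c!2^{tc}\rho W$ equals $W_{F(\mc{C})}/2$, which amounts to checking $c!2^{tc}\rho W = 0$ is \emph{not} what we need; rather we must confirm the random-assignment baseline matches.

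The main obstacle will be precisely this last baseline check: verifying that a uniformly random $\mathbb{F}_2$-assignment to $F(\mc{C})$ satisfies exactly $W_{F(\mc{C})}/2$ weight, and reconciling this with the constant ($|S|=\emptyset$) term of $g$. The empty-set Fourier coefficient $\hat{g}(\emptyset)=\rho\cdot W$ corresponds to a trivial equation $0=b_\emptyset$; I would handle it by the reduction's convention of removing the constant term (as in Lemma~\ref{csp2lin}), so that $\rho_t\cdot W = \hat{g}(\emptyset)$ is absorbed into the additive baseline and does not appear as an equation. With the constant term excluded, the nonconstant Fourier mass is balanced so that the expected satisfied weight over a random assignment is exactly half the total, and the identity above collapses to the clean statement $w_{F(\mc{C})}(x)\geq W_{F(\mc{C})}/2 + k/2$. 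Since every step except this normalization is a routine application of the Fourier-to-linear-equation correspondence already used in Lemma~\ref{csp2lin}, I would indeed omit the bulk of the calculation and simply note, as the paper does, that it is essentially identical to that earlier proof.
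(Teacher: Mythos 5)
Your proposal is correct and takes essentially the same route as the paper, which omits this proof and defers to Lemma~\ref{csp2lin}: the key identity is that the satisfied weight of the assignment corresponding to $\phi_t$ equals $\tfrac12 W_{F(\mc{C})} + \tfrac12 c!2^{tc}\bigl(w_t(\phi_t,\mc{C}) - \rho W\bigr)$ once the constant Fourier term $\hat{g}(\emptyset)=\rho_t W=\rho W$ is excluded from the equation set, exactly the convention used there. The detour in your middle paragraph (where the identity is first written with $g(\tilde{x})$ in place of $g(\tilde{x})-\hat{g}(\emptyset)$) is resolved correctly by the baseline check in your final paragraph, so the argument as a whole is sound.
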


Let $F(\mc{C})$ be an instance {\sc Max-$3t$-Lin-2 Above Average} obtained from {\sc $t$-Ordering} instance $(V,\mc{C},k)$. We can apply Theorem~\ref{maxclin2yes} from the previous section, and either decide that $F(\mc{C})$ is a yes-instance or that $F(\mc{C})$ has at most $(3t(3t+1)/2)k$ variables in polynomial time. By Lemma \ref{tord2lin}, the {\sc $t$-Ordering} instance is yes in the first case and thus, the {\sc 3-Linear Ordering} instance is yes as well by Observation \ref{yesyes}. The second outcome, however, does not necessarily imply a kernel for the {\sc 3-Linear Ordering} instance $(V,\mc{C},k)$.

To illustrate, let us consider {\sc Max Acyclic Subgraph}. Let us take $t=1$ and consider the associated {\sc $t$-Ordering} instance. For an edge constraint $e=(v_1, v_2)$, we have the payoff function $g_e(v_1,v_2)=1/2+v_1/4-v_2/4$ for $t$-ordering. Consider an instance graph $\overrightarrow{C_n}=v_1,\ldots , v_n$, in which the oriented edges of $\overrightarrow{C_n}$ form the constraints. Then $w_t(\phi_t,\overrightarrow{C_n})=n/2$ for any $\phi_t \in \{1,-1\}^n$. Hence the corresponding {\sc Max-$2$-Lin-2 Above Average} instance $F(\overrightarrow{C_n})$ is a no-instance and the number of variables is bounded (in fact, it has no variables). But there is a linear ordering $\phi$ with $w(\phi,\overrightarrow{C_n})=n-1$, and thus $\overrightarrow{C_n}$ is a yes-instance for any $k\leq (n-2)/2$. Moreover, the instance $\overrightarrow{C_n}$ is irreducible. Hence the gap between the number of variables in a no-instance $F(\overrightarrow{C_n})$ and $|V|$ in the {\sc 3-Linear Ordering} can be arbitrarily large.

In the example above, every variable in $V=\{v_1,\ldots , v_m\}$ for $\overrightarrow{C_n}$ disappears in the instance $F(\mc{C})$ when $t=1$. We show that $t=2$ is enough to ensure that every variable of $V$ is ``represented" in $F(\mc{C})$.

\begin{comment}
To resolve the issue, our approach is twofold: to reduce an instance $\mc{C}$ of {\sc $c$-Linear Ordering} into an equivalent one $\mc{C}'$ so that we get rid of inessential variables and constraints. Also we set $t$ large enough to $t$-Ordering is a good approximate of a linear ordering, but not too large to keep the instance size small. With appropriate reduction rules and $t$, we shall see that all variables of $\mc{C}'$ survive in its corresponding {\sc Max-$c$-Lin-2 Above Average} instance. Then by Corollary \ref{maxccspker} we have a problem kernel for {\sc $c$-Linear Ordering} with linear number of variables. In fact, it is shown that for arity $c=2$ and 3, fixing the constant $t=2$ is enough to ensure such property.
\end{comment}

Let $(V,\mc{C},k)$ be an instance of {\sc 3-Linear Ordering} and $t=2$. In the polynomial $g(x)$, we construe the variables $x_{2i-1}, x_{2i}$ as representing the position of the variable $v_i$ in a $t$-ordering $\phi_t$, i.e. $\phi_t(v_i)=(\tilde{x}_{2i-1}, \tilde{x}_{2i})$. We say that a variable $v_i\in V$ is {\em represented} in the associated {\sc $t$-Ordering} instance if $x_{2i-1}$ or $x_{2i}$ appears in the instance $F(\mc{C})$. We also say that a variable $v_i\in V$ is {\em represented} in a monomial $\hat{g}(S)\chi_S (x)$ if $S\cap \{2i-1,2i\}\neq \emptyset$ and $\hat{g}(S)\neq 0$. Obviously, $v_i$ is represented in the associated {\sc $t$-Ordering} instance if and only if there is a monomial in $g(x)$ representing $v_i$.

\begin{comment}
if and only if there is  which expresses the position of $v$ in a $t$-Ordering. Recall that some variables from $V$ may not appear after we simplify the function $g(x)$ into the form of \eqref{eq:tord}. We shall reduce an instance $(V,\mc{C},k)$ of {\sc $c$-Linear Ordering} into $(V',\mc{C}',k')$ so that in the reduced instance, every variable is represented in the associated {\sc $t$-Ordering} instance. The remaining issue is how to ensure that every variable of the reduced instance is represented in the associated {\sc $t$-Ordering} instance.
\end{comment}

An important observation is that a monomial of $g_e(x)$ which represents {\em every} variable of $var(e)$ for a constraint $e$ cannot be canceled out in $g(x)$ by a monomial generated by a constraint $e'$ with $|e|=|e'|$ and $var(e')\neq var(e)$. Therefore, we shall study the sum $\sum_{e'\in \mc{C}(e)} w_{e'}\cdot g_{e'}(x)$, and prove that there is a single monomial in the sum representing every variable of $var(e)$.

\begin{lemma}\label{represented2}
Let $I=(V,\mc{C},k)$ be an irreducible instance and $t=2$. For a constraint $e=(v_1,v_2)\in \mc{C}$, the fourier coefficient $\hat{g}_e(S)$ is nonzero for $S=\{1,3,4\}$.
\end{lemma}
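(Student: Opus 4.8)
The plan is to prove the statement by a direct computation of the payoff polynomial $g_e$ of a single binary constraint; note that the claim concerns only $g_e$ (a fixed function of the four local variables that encode the two buckets when $t=2$), so it is independent of the rest of the instance and the irreducibility hypothesis plays no role here. With the convention $\phi_t(v_1)=(\tilde x_1,\tilde x_2)$ and $\phi_t(v_2)=(\tilde x_3,\tilde x_4)$, the function $g_e(x_1,x_2,x_3,x_4)$ is the probability that a random extension of the bucket ordering places $v_1$ before $v_2$. Since two buckets are compared lexicographically with $(+1)<(-1)$, this probability equals $1$ when $v_1$'s bucket is strictly below $v_2$'s, equals $0$ when it is strictly above, and equals $1/2$ when the two buckets coincide.

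First I would exploit the lexicographic (recursive) structure to avoid a $16$-case table. Comparing the first coordinates decides the order unless $x_1=x_3$, in which case the second coordinates decide it, and the $t=1$ base case is the function $\tfrac12+\tfrac{u}{4}-\tfrac{v}{4}$ already recorded in the {\sc Max Acyclic Subgraph} illustration. This gives
\[
g_e=\mathbf{1}[x_1<x_3]+\mathbf{1}[x_1=x_3]\cdot\Bigl(\tfrac12+\tfrac{x_2}{4}-\tfrac{x_4}{4}\Bigr).
\]
Then I would replace the two indicators by their multilinear forms, $\mathbf{1}[x_1<x_3]=\tfrac{(1+x_1)(1-x_3)}{4}$ and $\mathbf{1}[x_1=x_3]=\tfrac{1+x_1x_3}{2}$, expand, and collect monomials. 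A short calculation yields
\[
g_e=\tfrac12+\tfrac{x_1}{4}-\tfrac{x_3}{4}+\tfrac{x_2}{8}-\tfrac{x_4}{8}+\tfrac{x_1x_2x_3}{8}-\tfrac{x_1x_3x_4}{8}.
\]
Since the multilinear monomials are exactly the characters $\chi_S$, the Fourier coefficients can be read directly off this expression; in particular $\hat g_e(\{1,3,4\})$ is the coefficient of $\chi_{\{1,3,4\}}=x_1x_3x_4$, namely $-\tfrac18\neq 0$, which is the claim.

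The only real care needed---and the place where a sign error could creep in---is fixing the orientation conventions consistently: that $v_1$ owns local indices $\{1,2\}$ and $v_2$ owns $\{3,4\}$, and that the bucket order uses $(+1)<(-1)$, so that $\mathbf{1}[x_1<x_3]$ is supported on $x_1=+1,\,x_3=-1$. To guard against such slips I would sanity-check the final polynomial on a few assignments: disjoint buckets of the correct orientation should give $g_e\in\{0,1\}$ and equal buckets should give $1/2$ (for example $(x_1,x_2,x_3,x_4)=(+1,+1,-1,-1)$ gives $1$, while $(+1,+1,+1,+1)$ gives $1/2$). I do not expect any genuine obstacle; the substance of the lemma is simply that the degree-$3$ monomial $x_1x_3x_4$, which simultaneously involves the index $1$ (a coordinate of $v_1$) and both coordinates $3,4$ of $v_2$, survives in $g_e$ with a nonzero coefficient, so that this constraint contributes a monomial witnessing the representation of its endpoints.
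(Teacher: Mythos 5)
Your proof is correct and reaches the same value $\hat g_e(\{1,3,4\})=-\tfrac18$ that the paper records as $-2/2^4$, but by a genuinely different computation. The paper evaluates only this one coefficient via the inverse Fourier transform: it splits $\Ex_{x}[g_e(x)\chi_S(x)]$ over the support of $g_e$ into the four diagonal points $\hat x_1=\hat x_2$ (where $g_e=\tfrac12$ and $\chi_S$ reduces to $x_4$, so that part vanishes) and the six off-diagonal points $\hat x_1<\hat x_2$ (where $g_e=1$), and sums $\chi_S$ over the latter to get $-2$. You instead use the recursive structure of lexicographic comparison to write $g_e=\mathbf{1}[x_1<x_3]+\mathbf{1}[x_1=x_3]\bigl(\tfrac12+\tfrac{x_2}{4}-\tfrac{x_4}{4}\bigr)$, substitute multilinear forms of the two indicators, and expand; your expansion is correct (the $x_1x_3$ terms indeed cancel, and it passes your spot checks). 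Your route buys the complete Fourier spectrum of $g_e$ in one pass --- making it transparent that $x_1x_2x_3$ and $x_1x_3x_4$ are the only degree-$3$ characters present, and enabling the semantic sanity checks --- whereas the paper's route is lighter per coefficient and, more importantly, is the same technique it reuses for the ternary case (Lemma~\ref{represented3}), where a full multilinear expansion of a six-variable payoff would be considerably more tedious. Your side remark is also accurate: irreducibility plays no role in this lemma, since $g_e$ is a fixed function of the four local variables; it matters only later, in Lemma~\ref{represented}, where the Cancellation Rule is what prevents the $\{1,3,4\}$ monomial of $e=(v_1,v_2)$ from being canceled by the opposite-signed contribution of a reverse constraint $(v_2,v_1)$.
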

\begin{proof} Given two vectors $u$ and $v$, denote their concatenation $u \diamond v$ by the vector $w$ where $w_i = u_i$ for all $i=1,\ldots,|u|$, and $w_{|u|+i} = v_i$ for all $j=1,\ldots,|v|$.
We recall that $g_e(x)$ is nonzero only if $(x_1,x_2)\leq (x_3,x_4)$. By the inverse fourier transform, we have $$\hat{g}_e(S)=\Ex_{x\in \{1,-1\}^4} [g_e(x)\chi_S(x)] = \frac{1}{2^4}\left(\sum_{x=\hat{x}_1\diamond \hat{x}_1}g_e(x)\chi_S(x) + \sum_{x=\hat{x}_1\diamond \hat{x}_2, \hat{x}_1 < \hat{x}_2}g_e(x)\chi_S(x) \right),$$ where $\hat{x}_i \in \{1,-1\}^2$. If $x$ is of the form $\hat{x}^2$ as in the first sum, we have $\chi_S(x)=x_1x_3x_4=x_4$ and the sum over such vectors equals zero. If $x=\hat{x}_1\diamond \hat{x}_2$ and $\hat{x}_1 < \hat{x}_2$, then it is tedious to check that the sum equals -2. The claim follows.
\end{proof}

\begin{lemma}\label{represented3}
Let $I=(V,\mc{C},k)$ be an irreducible instance and $t=2$. For a constraint $e=(v_1,v_2,v_3)\in \mc{C}$, the fourier coefficient $\hat{g}_e(S)$ is $\frac{-2}{2^6}$ for $S=\{1,3,5,6\}$ and $S=\{1,2,3,5\}$.
\end{lemma}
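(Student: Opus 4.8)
The plan is to mirror the computation in Lemma~\ref{represented2}, but now with three $2$-bit blocks instead of two. I would write $x = (a^{(1)}, a^{(2)}, b^{(1)}, b^{(2)}, c^{(1)}, c^{(2)}) \in \{1,-1\}^6$, where the pairs $a = (a^{(1)}, a^{(2)})$, $b$, $c$ encode the buckets of $v_1, v_2, v_3$ under a $2$-ordering. Recall that $g_e(x)$ is supported only on tuples whose buckets are lexicographically non-decreasing, and there it equals the probability that a random extension places $v_1, v_2, v_3$ in the correct relative order: concretely $g_e = 1$ when $a < b < c$, $g_e = 1/2$ when exactly one of the two adjacent pairs is a tie (i.e. $a = b < c$ or $a < b = c$), and $g_e = 1/6$ when $a = b = c$. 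Using the inverse Fourier transform $\hat{g}_e(S) = \Ex_x[g_e(x)\chi_S(x)] = 2^{-6}\sum_x g_e(x)\chi_S(x)$, I would split the sum according to this equality pattern.

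The second step is to factor the character over the three blocks. For a $2$-bit block $u$, let $f(u) = u^{(1)}$ (its first bit) and $p(u) = u^{(1)}u^{(2)}$ (the product of its bits); over the four buckets, ordered $0,1,2,3$ in the ``funny'' lexicographic order with $(+1)<(-1)$, these take the values $f = (+1,+1,-1,-1)$ and $p = (+1,-1,-1,+1)$, and the key identity I would record is $\sum_{\text{buckets } u} p(u) = 0$. With this notation $\chi_{\{1,3,5,6\}} = f(a)\,f(b)\,p(c)$ and $\chi_{\{1,2,3,5\}} = p(a)\,f(b)\,f(c)$, so each of the four contributions factorizes into sums over the (partially collapsed) bucket indices.

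The third step carries out the four partial sums in each case. The all-equal case vanishes immediately, since after using $f(u)^2 = 1$ it reduces to $\sum_u p(u) = 0$. The two single-tie cases collapse one block into its neighbor and, after using $f^2 = 1$ or $p^2 = 1$, reduce to weighted sums of $f$ or $p$ over buckets, which I would evaluate via the cumulative partial sums of $f$, of $p$, and of $pf$; a direct tabulation shows each of these is $0$. The only surviving term is the strictly-increasing sum $\sum_{a<b<c}$, which ranges over the four triples $(0,1,2), (0,1,3), (0,2,3), (1,2,3)$; evaluating the factored character on these gives $-1+1-1-1 = -2$ for $S=\{1,3,5,6\}$ and $-1-1+1-1 = -2$ for $S=\{1,2,3,5\}$. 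Dividing by $2^6$ yields $\hat{g}_e(S) = -2/2^6$ in both cases, as claimed.

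I expect the main obstacle to be purely bookkeeping rather than conceptual: one must be careful that the ordering $(+1) < (-1)$ reverses the naive correspondence between bit patterns and bucket ranks, so the values of $f$ and $p$ per bucket, and hence every sign in the partial sums, have to be tabulated explicitly. A secondary point to verify is that the single-tie and all-equal contributions genuinely cancel --- this is exactly what makes the final coefficient come entirely from the four strictly-increasing triples --- and that the weights $1/2$ and $1/6$ leave no nonzero residue; both follow from the vanishing cumulative sums noted above.
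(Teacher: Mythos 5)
Your proof is correct and follows essentially the same route as the paper's: both apply the inverse Fourier transform, split the sum over the support of $g_e$ according to the equality pattern of the three $2$-bit blocks (all equal, one adjacent tie, strictly increasing), show the first three contributions vanish, and extract the value $-2$ from the four strictly increasing bucket triples. The only difference is presentational: your $f$/$p$ block factorization and explicit tabulations carry out the cancellations that the paper merely declares ``tedious to check,'' and you verify both sets $S$ directly rather than invoking symmetry.
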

\begin{proof} Recall that we denote the concatenation of two vectors $u \diamond v$ by the vector $w$ where $w_i = u_i$ for all $i=1,\ldots,|u|$, and $w_{|u|+i} = v_i$ for all $j=1,\ldots,|v|$.

Consider $S=\{1,3,5,6\}$. The case when $S=\{1,2,3,5\}$ is symmetric. We have
\begin{eqnarray}
& &\hat{g}_e(S)\nonumber \\
 &=& \frac{1}{2^6}\left(\sum_{x=\hat{x}_1^3}g_e(x)\chi_S(x) + \sum_{x=\hat{x}_1^2\diamond \hat{x}_2}g_e(x)\chi_S(x) + \sum_{x=\hat{x}_1\diamond \hat{x}_2^2}g_e(x)\chi_S(x) + \sum_{x=\hat{x}_1\diamond \hat{x}_2\diamond \hat{x}_3}g_e(x)\chi_S(x)\right)\nonumber
\end{eqnarray}
\noindent where $\hat{x}_i \in \{1,-1\}^2$ and the sum is taken over all $x=\hat{x}_1\diamond \hat{x}_2 \diamond \hat{x}_3$ such that $\hat{x}_1\leq \hat{x}_2 \leq \hat{x}_3$. If $x=\hat{x}^3$ or of the form $\hat{x}_1^2\diamond \hat{x}_2$, we have $\chi_S(x)=x_1x_3x_5x_6=x_5x_6$ and the sum over such vectors equals zero. If $x=\hat{x}_1\diamond \hat{x}_2^2$, we have $\chi_S(x)=x_1x_6$ it is tedious to check that the sum equals zero. If $x$ is of the form $\hat{x}_1\diamond \hat{x}_2\diamond \hat{x}_3$, observe that the sum equals -2. The claim follows.
\end{proof}

\begin{lemma}\label{represented}
Let $I=(V,\mc{C},k)$ be an irreducible instance and $t=2$. Then every variable $v \in V$ is represented in the associated {\sc $t$-Ordering} instance.
\end{lemma}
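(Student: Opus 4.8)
The plan is to show that for each variable $v_i$ some Fourier coefficient $\hat g(S)$ with $S\cap\{2i-1,2i\}\neq\emptyset$ is nonzero. By the Redundancy Rule and irreducibility, $v_i$ occurs in at least one constraint, so I split into two cases according to whether $v_i$ occurs in a ternary constraint or only in binary ones. The guiding principle throughout is the observation preceding the lemma: a monomial that represents \emph{all} variables of $var(e)$ can only be cancelled by a constraint on the same variable set, so I will always work with coefficients $\hat g(S)$ for sets $S$ that meet every coordinate pair of $var(e)$.

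First, suppose $v_i$ lies only in binary constraints, and fix one binary $e=(v_i,w)\in\mc{C}$ (or $(w,v_i)$). Take the set singled out by Lemma~\ref{represented2}, which in global coordinates meets both coordinates of $w$ and one coordinate of $v_i$, so it represents both variables. Any $e'$ with $\hat g_{e'}(S)\neq0$ must have $S$ contained in the coordinate pairs of $var(e')$, hence $\{v_i,w\}\subseteq var(e')$; since $v_i$ lies in no ternary, $e'$ is binary on $\{v_i,w\}$, and by the Cancellation Rule the reverse orientation is absent. Thus $e$ is the unique contributor and $\hat g(S)=w(e)\,\hat g_e(S)\neq0$.

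The interesting case is when $v_i$ lies in a ternary constraint $e$ with $var(e)=\{v_1,v_2,v_3\}$. Here I consider only \emph{full} $4$-subsets $S$, meaning those meeting all three coordinate pairs in the pattern $(2,1,1)$; every such $S$ represents all three variables, in particular $v_i$. By the observation above together with the arity bound $c=3$, the only contributors to $\hat g(S)$ for a full $S$ are the (at most six) orderings of $\{v_1,v_2,v_3\}$, so $\hat g(S)=\sum_\pi w_\pi\,\hat g_\pi(S)$ over these. Two facts drive the argument: (i) reversing an ordering corresponds to $x\mapsto -x$, since flipping all buckets reverses the lexicographic order, whence $g_{\bar\pi}(x)=g_\pi(-x)$ and $\hat g_{\bar\pi}(S)=(-1)^{|S|}\hat g_\pi(S)=\hat g_\pi(S)$ for even $|S|$; and (ii) $\sum_\pi g_\pi\equiv1$ over the six orderings, so $\sum_\pi\hat g_\pi(S)=0$ for nonempty $S$. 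By (i), on full $4$-subsets the value $\hat g_\pi(S)$ depends only on the reverse-pair class of $\pi$, so writing $W_1,W_2,W_3$ for the total weights of the three reverse-pair classes we get $\hat g(S)=\langle c_S,(W_1,W_2,W_3)\rangle$, where $c_S$ is the class-coefficient vector, and by (ii) each $c_S$ satisfies $(c_S)_1+(c_S)_2+(c_S)_3=0$.

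The crux—and the step I expect to need genuine calculation—is to pin down the span of the vectors $c_S$. Evaluating exactly as in Lemma~\ref{represented3}, the full set $\{1,3,5,6\}$ gives a class-vector proportional to $(-1,2,-1)$ and the full set $\{1,2,3,5\}$ gives $(-1,-1,2)$; these are linearly independent and both lie in the plane $\{y:y_1+y_2+y_3=0\}$, so the $c_S$ span that entire plane. Hence \emph{all} full $4$-subset coefficients of $g$ vanish if and only if $(W_1,W_2,W_3)\in\mathrm{span}(1,1,1)$, i.e. $W_1=W_2=W_3$. Finally I invoke Lemma~\ref{absentpair}: some reverse-pair class is absent, so its total weight is $0$, whereas the present constraint $e$ lies in a different class of positive weight; thus the $W_j$ are not all equal, some full $4$-subset coefficient $\hat g(S)$ is nonzero, and since $S$ is full it represents $v_i$. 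This settles both cases and proves the lemma.
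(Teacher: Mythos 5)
Your proof is correct. The binary case coincides with the paper's, but your ternary case takes a genuinely different route to the same end. The paper's proof is a one-set sign argument: after Lemma~\ref{absentpair} produces an absent reverse pair $(v_1,v_2,v_3),(v_3,v_2,v_1)$, it takes the single set $S=\{1,3,4,5\}$ (both coordinates of the middle variable of the absent pair, one coordinate of each other variable) and notes, via the relabelled form of Lemma~\ref{represented3}, that each of the four orderings of $var(e)$ that can still be present has coefficient $-2/2^6$ on this $S$; since all weights are positive, these contributions share a sign, so $\hat{g}(S)\neq 0$ with no further cancellation analysis. You instead use the symmetries $g_{\bar\pi}(x)=g_\pi(-x)$ and $\sum_\pi g_\pi\equiv 1$ to collapse the six orderings into three reverse-pair classes, compute two class-coefficient vectors from Lemma~\ref{represented3}, and characterize cancellation exactly: the coefficients on all sets meeting every coordinate pair vanish if and only if $W_1=W_2=W_3$, which Lemma~\ref{absentpair} (one class weight zero, another positive) forbids. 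Your route costs a little extra linear algebra but buys a sharper conclusion---it isolates equal class weights as the precise obstruction, and the paper's argument is recovered as the special case of pairing a vector such as $(-1,2,-1)$ against a class-weight vector that vanishes in the position of the $2$; your facts (i) and (ii) also make explicit the relabelling symmetry that the paper uses silently when it applies Lemma~\ref{represented3} to sets other than the two it literally computes. One cosmetic point: your stated class vectors for $\{1,3,5,6\}$ and $\{1,2,3,5\}$ presuppose a particular indexing of the three classes (under the ``index by the middle variable'' convention they come out as $(-1,-1,2)$ and $(2,-1,-1)$), but any two distinct permutations of $(-1,-1,2)$ are linearly independent and span the plane $y_1+y_2+y_3=0$, so nothing in the argument is affected.
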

\begin{proof}
Suppose $v\in V$ appears in some ternary constraint $e \in \mc{C}$. Consider the sum $\sum_{e'\in \mc{C}(e)} w_{e'}\cdot g_{e'}(x)$. By Lemma \ref{absentpair}, we know that there is an absent pair $e_1$, $e_2$ such that $var(e_1)=var(e_2)=var(e)$ and $e_1,e_2\notin \mc{C}(e)$. Let $e_1=(v_1,v_2,v_3)$. We take $S=\{1,3,4,5\}$. Then by Lemma \ref{represented3}, the fourier coefficient $\hat{g}_{e'}(S)$ is negative for every $e'\in \mc{C}(e)$. Observe that the monomial corresponding to $S$ in the sum $\sum_{e'\in \mc{C}(e)} g_{e'}(x)$ represents all variables in $var(e)$. Moreover, this monomial is not canceled out in $g(x)$. Therefore, we conclude that $v$ is represented by the associated {\sc $t$-Ordering} instance.

If $v\in V$ does not appear in any ternary constraint, then there is a binary constraint $e \in \mc{C}$ in which $v$ appears since the instance is reduced with respect to Redundancy Rule. Let $e=(v_1,v_2)$. As the instance is reduced with respect to Cancelation Rule, $(v_2,v_1)\notin \mc{C}$. Now Lemma \ref{represented2} implies that $v$ is represented by the associated {\sc $t$-Ordering} instance.
\end{proof}

\begin{proofof}{Theorem~\ref{linker3per}}
Let $I=(V,\mc{C},k)$ be an irreducible instance of {\sc 3-Linear Ordering} and assume $I$ is a no-instance. By Observation \ref{yesyes}, the associated {\sc $t$-Ordering} instance is no. By Lemmas~\ref{tord2lin} and Theorem~\ref{maxclin2}, the instance $F(\mc{C})$ of {\sc Max-$6$-Lin-2 Above Average} has less than $15k$ variables. Finally, Lemma \ref{represented} says that every variable in $V$ is represented in the associated {\sc $t$-Ordering instance} and thus $|V|\leq 15k$.
\end{proofof}

The problem kernel for {\sc 2-Linear Ordering} follows immediately.

\begin{corollary}\label{linker2per}
Let $I=(V,\mc{C},k)$ be an irreducible instance of {\sc 2-Linear Ordering}. If $I$ is a no-instance, we have $|V|< 10k$ variables.
\end{corollary}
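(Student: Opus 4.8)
The plan is to mirror the proof of Theorem~\ref{linker3per} but in the simpler setting where every constraint has arity at most $2$, so that we take $t=2$ and reduce to {\sc Max-$2t$-Lin-2}$=${\sc Max-$4$-Lin-2}. Since $I$ is irreducible, it is in particular reduced with respect to the Redundancy and Cancellation Rules. First I would observe that for arity-$2$ constraints, the relevant monomial analysis is exactly Lemma~\ref{represented2}: for any binary constraint $e=(v_1,v_2)\in\mc{C}$, the fourier coefficient $\hat{g}_e(S)$ is nonzero for $S=\{1,3,4\}$. Because the instance is reduced with respect to the Cancellation Rule, the reverse constraint $(v_2,v_1)$ is absent (or has been cancelled), so the monomial on $S=\{1,3,4\}$ arising from $e$ cannot be cancelled by a monomial from a constraint on the same pair of variables, and it certainly cannot be cancelled by a constraint on a different variable set. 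Hence every variable $v\in V$ is represented in the associated {\sc $t$-Ordering} instance, exactly as in Lemma~\ref{represented} (restricted to the binary case).

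Next I would carry out the same chain of reductions used in the proof of Theorem~\ref{linker3per}. Assuming $I$ is a no-instance, Observation~\ref{yesyes} shows the associated {\sc $t$-Ordering} instance is also a no-instance. Every monomial of $g(x)$ now has degree at most $2t=4$, so by Lemma~\ref{tord2lin} the corresponding instance $F(\mc{C})$ is an instance of {\sc Max-$2t$-Lin-2 Above Average}, i.e.\ {\sc Max-$4$-Lin-2 Above Average}, which is likewise a no-instance. Applying Theorem~\ref{maxclin2yes} with $c=4$, a no-instance of {\sc Max-$4$-Lin-2} must have fewer than $(c(c+1)/2)k = (4\cdot 5/2)k = 10k$ variables.

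Finally I would combine these two facts: by the representation argument, the number of variables in $V$ is at most the number of variables surviving in $F(\mc{C})$, which is less than $10k$. This yields $|V|<10k$, as claimed. The main point to get right is the representation step: one must verify that for binary constraints the degree of the ordering polynomial is bounded by $2t=4$ rather than $3t=6$, which is what allows $c=4$ (and hence the bound $10k$) rather than $c=6$ in the application of Theorem~\ref{maxclin2yes}. Everything else is a direct specialization of the arity-$3$ argument, so no new obstacle arises beyond confirming that Lemma~\ref{represented2} alone suffices to represent every variable when all constraints are binary.
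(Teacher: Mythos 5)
Your proposal is correct and takes essentially the same route as the paper, which proves this corollary by saying the argument of Theorem~\ref{linker3per} goes through verbatim except that binary constraints produce a {\sc Max-$4$-Lin-2 Above Average} instance $F(\mc{C})$, so Theorem~\ref{maxclin2yes} with $c=4$ yields the $10k$ bound. The details you spell out---degree at most $2t=4$ for the polynomial $g(x)$, representation of every variable via Lemma~\ref{represented2} together with the Redundancy and Cancellation Rules, and transfer of no-instance status through Observation~\ref{yesyes} and Lemma~\ref{tord2lin}---are precisely what the paper leaves implicit in its one-line proof.
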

\begin{proof}
The proof remains the same as in the proof of Theorem~\ref{linker3per} except that the instance $F(\mc{C})$ of {\sc Max-$4$-Lin-2 Above Average} has less than $10k$ variables.
\end{proof}

Combined with an exact $O^*(2^n)$-algorithm\footnote{The notation $O^*$ suppresses polylogarithmic factors}, our result in this section leads to the following result.

\begin{theorem}\label{singleexp}
The problem $\text{\sc Max-$c$-Permutation CSP Above Average}$ can be solved in time $2^{O(k)}$ for $c=2,3$.
\end{theorem}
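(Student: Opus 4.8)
The plan is to combine the kernelization results for permutation CSPs established in this section with a brute-force exact algorithm and the reduction of Proposition~\ref{focuslin}. The overall strategy is to first reduce an arbitrary {\sc Max-$c$-Permutation CSP} instance to a {\sc $c$-Linear Ordering} instance, exhaustively apply the reduction rules to obtain an irreducible instance, and then use the linear-variable bound from Theorem~\ref{linker3per} (for $c=3$) or Corollary~\ref{linker2per} (for $c=2$) to decide the problem in single-exponential time.

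First I would invoke Proposition~\ref{focuslin} to transform the given instance of {\sc Max-$c$-Permutation CSP Above Average} into an equivalent instance of {\sc $c$-Linear Ordering} in polynomial time; this preserves yes/no-instances, so it suffices to solve {\sc $c$-Linear Ordering}. Next, I would apply all five reduction rules (Redundancy, Merging, Cancellation, Edge Replacement, Cycle Replacement) exhaustively, which by the remarks following Lemma~\ref{sound} takes polynomial time and yields an \emph{irreducible} equivalent instance $I=(V,\mc{C},k)$. Now I would appeal to the dichotomy provided by Theorem~\ref{linker3per} (resp.\ Corollary~\ref{linker2per}): either $|V| < 15k$ (resp.\ $|V| < 10k$), or $I$ is a yes-instance. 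In the latter case we simply answer ``yes.'' In the former case, the number of variables is $O(k)$, so we may run the exact $O^{\star}(2^n)$ algorithm on the reduced instance; since $n = |V| = O(k)$, this runs in $2^{O(k)}$ time and decides the instance exactly.

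Putting these steps together gives a $2^{O(k)}$-time decision procedure for $c=2,3$: the polynomial-time reductions and rule applications are dominated by the final exact search over $O(k)$ variables. I would note that the correctness hinges entirely on the already-established facts — the soundness of the reduction rules (Lemma~\ref{sound}), the equivalence of the reductions (Proposition~\ref{focuslin}), and the variable bounds of Theorem~\ref{linker3per} and Corollary~\ref{linker2per} — so the present proof is essentially an assembly of these ingredients with the exact algorithm.

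The main (and only genuine) obstacle here is ensuring that the exact $O^{\star}(2^n)$ permutation-CSP algorithm can indeed be run on the kernelized instance, since the variable bound is stated for the {\sc $c$-Linear Ordering} formulation obtained after reduction. One must check that a kernel on $O(k)$ variables for {\sc $c$-Linear Ordering} translates back to a permutation-CSP instance of comparable size on which the $2^{O(k)}$ search is legitimate — but because the reductions of Proposition~\ref{focuslin} and the rules preserve the variable set up to polynomial factors and yes/no status, this is routine. The heavy lifting has already been done in proving the linear variable bounds, so this final theorem is the straightforward capstone.
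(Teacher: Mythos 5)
Your overall architecture matches the paper's proof exactly: reduce to {\sc $c$-Linear Ordering} via Proposition~\ref{focuslin}, make the instance irreducible, invoke Theorem~\ref{linker3per} (resp.\ Corollary~\ref{linker2per}) to either answer ``yes'' or shrink to $O(k)$ variables, and finish with an exact algorithm. The gap is in the last step: you invoke ``the exact $O^{\star}(2^n)$ algorithm'' as if it were an off-the-shelf fact, but for an \emph{ordering} problem this is precisely where the work lies. Naive brute force enumerates linear orderings, of which there are $n!$, so on a kernel with $n=O(k)$ variables it yields only $2^{O(k\log k)}$ time, which does not prove the $2^{O(k)}$ bound. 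An $O^{\star}(2^n)$ bound requires dynamic programming over subsets (Held--Karp style), and it is not immediate that the objective of weighted {\sc $3$-Linear Ordering} decomposes in a way this DP can handle. The paper spends most of its proof on exactly this point: it adapts the vertex-ordering framework of Bodlaender et al.~\cite{BFKKT11}, defining for each instance $I=(V,\mc{C})$, subset $S\subseteq V$ and variable $v$ the quantity $f(I,S,v)=|\{e=(a,v,c)\in \mc{C}: a\in S,\ c\in V\setminus (S\cup \{v\})\}|$, and verifying that $\max_{\pi}\sum_{v\in V} f(I,\pi_{<,v},v)$ equals the optimum because the constraints are partitioned by their middle variable and each contributes to $f$ exactly when it is satisfied. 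Without this (or an equivalent subset-DP formulation, including the check that it extends to weights and to mixed arity-$2$/arity-$3$ constraints), your proof does not reach the claimed running time.

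A secondary point: the obstacle you single out as the ``main'' one --- translating the kernelized {\sc $c$-Linear Ordering} instance back into a permutation CSP --- is not an obstacle at all, since {\sc $c$-Linear Ordering} is itself a permutation CSP and the exact algorithm is run directly on it. Identifying the wrong step as the crux while black-boxing the real one is what makes the gap genuine rather than cosmetic.
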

\begin{proof}
Using Proposition \ref{focuslin}, $\text{\sc Max-$c$-Permutation CSP Above Average}$ instances can be transformed into an equivalent instance of {\sc $c$-Linear Ordering}. Due to the result stated as Theorem~\ref{linker3per} and Corollary~\ref{linker2per}, we either know the instance is a yes-instance or attain an equivalent instance with less than $15k$ variables for $c=3$ (or an instance with less than $10k$ variables for $c=2$). Hence, an $O^{\star}(2^n)$-algorithm to exactly compute the maximum number of satisfiable constraints on a $n$-variable instance of {\sc Exact 3-Linear Ordering} will yield a desired result. Here, an instance $I$ of {\sc Exact 3-Linear Ordering} is given as a pair $(V,\mc{C})$ and the task is to find a linear ordering on $V$ so as to {\em maximize} the number of satisfied constraints in $\mc{C}$. We give an exposition of such an algorithm for $c=3$. An analogous observation applies to $c=2$.

Bodlaender et. al \cite{BFKKT11} presents algorithms for {\sc Vertex Ordering} problems which runs in $O^{\star}(2^n)$ time and $O^{\star}(2^n)$ space, or $O^{\star}(4^n)$ time and polynomial space. Let $V$ be a set of elements, which may be vertices in graph problems or variables in our permutation CSP context. For a linear ordering $\pi$ on $V$, we denote the set $\{w\in V: \pi(w)<\pi (v)\}$ by $\pi_{<,v}$. Consider a function $f$ from the domain of triples $(G,S,v)$ to an integer, where $G$ is a graph, $S\subseteq V(G)$ and $v\in V(G)$.  It is shown in $\cite{BFKKT11}$ that if $f$ is polynomial time computable, the value $$\min_{\pi} \sum_{v\in V} f(G,\pi_{<,v},v),$$ where $\pi$ is taken over all possible linear orderings, can be computed either in $O^{\star}(2^n)$ time and $O^{\star}(2^n)$ space, or in $O^{\star}(4^n)$ time and polynomial space. Alerted readers might notice that the former uses dynamic programming in Held-Karp style and the latter employs recursion instead. We point out that although they describe the algorithms in the context of graph problems, the validity does not depend on whether the relations on $V$ are of arity two or not. It remains to formulate the {\sc Exact 3-Linear Ordering} problem to fit in the setting.

We take $f(I, S ,v):=|\{e=(a,v,c)\in \mc{C}:a\in S, c\in V\setminus (S\cup \{v\}\}|$ and note that $f$ can be computed in polynomial time given a triple $I=(V,\mc{C}), S\subseteq V$ and $v\in V$. To see that $\min_{\pi} \sum_{v\in V} -f(I,\pi_{<,v},v)=\max_{\pi}\sum_{v\in V} f(I,\pi_{<,v},v)$ equals the optimal value of the {\sc Exact 3-Linear Ordering} instance $I$, it suffices to observe the followings: given a linear ordering $\pi$, (a) the family $\{e=(a,v,c)\in \mc{C}\}, v\in V$ {\em partitions} the constraint set $\mc{C}$, (b) a constraint $e=(a,v,c) \in \mc{C}$ contributes one to $f(I,\pi_{<,v},v)$ if and only if $\pi$ satisfies $e$. Finally, we note that the extension of the formulation to weighted instances and instances with (some) constraints of arity two is straightforward.
\end{proof}

Closing this section, we point out that the recent work in \cite{guruswami2011}, independently of our paper, also explores the idea of ensuring a monomial which represents a variable in the multilinear polynomial. They consider the {\sc $c$-Linear ordering} problem in which every variable occurs in a bounded number of constraints and show that approximation beyond the random assignment threshold $1/c!$ is achievable. It is interesting to note as well that their motivation is to gain over the random assignment threshold.

\paragraph{\em Acknowledgement} The authors would like to thank Daniel Gon\c{c}alves for valuable discussion which inspired the results of Section~\ref{sec:per}.

\bibliographystyle{abbrv}
\bibliography{biblio}

\appendix

\section{Proof of Lemma~\ref{csp2lin}}\label{appendixproof}

%\begin{comment}
\begin{proofof}{Lemma~\ref{csp2lin}} Let $F = \{(f_1,S_1,w_1),\ldots, (f_m,S_m,w_m)\}$ be a given set of constraints and let $AVG$ be the expected weight in $F$ of a uniform random variable assignment. (For definitions, see the Preliminaries.) First, convert each constraint function $f_i : \{0,1\}^{|S_i|} \rightarrow \{0,1\}$ into the form $f'_i : \{-1,1\}^{|S_i|} \rightarrow \{0,1\}$, replacing all $0$'s with $1$'s and all $1$'s with $-1$'s. This can be easily done via the linear transformation $\ell(x) = 1-2x$. Let $F'$ be the new set of constraints.

Let $s_i = |S_i|$. A well-known fact is that for any function $f_i : \{-1,1\}^{s_i} \rightarrow \{0,1\}$ there is always a unique multivariate polynomial $p_i(x_1,\ldots,x_{s_i})$ such that for all assignments $\vec{a} \in \{-1,1\}^{s_i}$, $p_i(\vec{a}) = f_i(\vec{a})$. This polynomial has the form \[p_i(x_1,\ldots,x_{s_i}) = \sum_{T \subseteq [s']} \alpha^{(i)}_T \left(\prod_{i \in T} x_i\right),\] where every $\alpha^{(i)}_T = j/2^{s_i}$ for some integer $j \in [-2^{s_i},2^{s_i}]$. Define $q_i(x) = p_i(x) - \alpha^{(i)}_{\varnothing}$. That is, $q_i$ equals $p_i$ minus the constant coefficient of $p_i$.

We claim that for all assignments $\vec{a}=(a_1,\ldots,a_n) \in \{0,1\}^{n}$, $\vec{a}$ has weight at least $AVG+k$ in $F$ if and only if $\sum_{i=1}^m w_i\cdot q_i(\ell(a_1),\ldots,\ell(a_n))\geq k$. To see this, observe that the sum of all constant coefficients of $p_i$ multiplied by the weight $w_i$ is exactly $AVG$, since this sum is

\[
\begin{array}{lcl}
\sum_{i=1}^m w_i \cdot \alpha^{(i)}_{\varnothing}
&=& \sum_{i=1}^m  w_i\cdot \Ex_{x \in \{-1,1\}^n} \left[p_i(x) - q_i(x)\right] = AVG - \sum_{i=1}^m w_i\cdot \Ex_{x \in \{-1,1\}^n} \left[q_i(x)\right] \\ \\
&=& AVG - \sum_{i=1}^m w_i \cdot \sum_{T \neq \varnothing} \alpha^{(i)}_T \Ex_{x \in \{-1,1\}^n} \left[\prod_{i \in T} x_i\right] = AVG,
\end{array}
\]

 where the last equality follows because any nontrivial product of random variables over $\{-1,1\}$ has expectation zero. Hence a $\{0,1\}$-assignment with weight $w$ in $F$ translates directly to a $\{-1,1\}$-assignment that makes $\sum_{i} w_i\cdot q_i = w - AVG$.

Now we reduce the problem of finding a $\{-1,1\}$-assignment such that $\sum_i q_i \geq k/2^c$ to finding a $\{0,1\}$-assignment to a {\sc Max-$c$-Lin-2} instance with weight at least $W/2 + k/2$. Let $r(x_1,\ldots,x_n) = \sum_i q_i(x_1,\ldots,x_n)$.

Associate each monomial in $r(x)$ with a linear equation $e(T) = 0$ of weight $2^c\cdot \sum_{i=1}^m w_i\cdot \alpha^{(i)}_T$, if this quantity is positive. If this quantity is negative, then associate with the equation $e(T) = 1$ of weight $-2^c\cdot \sum_{i=1}^m w_i\cdot \alpha^{(i)}_T$. Here $e(T)=\sum_{i\in T} y_i$. Notice that we need the $2^c$ factor in the weights in order to make them integral. For any $\{-1,1\}$-assignment $x$, take $y_i=\ell^{-1}(x_i)$ and note that $x_i=(-1)^{y_i}$. Hence

\[2^c\cdot \sum_{i=1}^m w_i\cdot \alpha^{(i)}_T \left(\prod_{i \in T} x_i\right) = 2^c \cdot \sum_{i=1}^m w_i\cdot \alpha^{(i)}_T  (-1)^{e(T)}\]

equals the weight of the corresponding equation $e(T)$ if $y$ satisfies it, and equals the negative of the weight of $e(T)$ if $y$ falsifies it. Lastly it remains to observe that any $\{0,1\}$-assignment $y$ to a {\sc Max-$c$-Lin-2} instance has a weight at least $W/2+k/2$ if and only if the weight of satisfied equations minus the weight of falsified equations by $y$ is at least $k$. This shows that if there exists a $\{-1,1\}$-assignment such that $\sum_i w_i\cdot q_i \geq k/2^c$, there is a $\{0,1\}$-assignment to the corresponding {\sc Max-$c$-Lin-2} instance with weight at least $W/2 + k/2$. The proof of the opposite direction is straightforward.

In the above construction, {\sc Max-$c$-CSP} can be reduced to {\sc Max-$c$-Lin-2} in $O(2^c \cdot m)$ time (where $m$ is the number of constraints) and the number $m'$ of equations in the transformed instance will be $O(2^c \cdot m)$ in the worst case.\end{proofof}
%\end{comment}

\end{document}